\numberwithin{equation}{section}
\begin{document}
\title{What is Ballistic Transport? }
\author{David Damanik}
\address{Department of Mathematics\\
  Rice University\\
 Houston, TX, 77005}
\email[]{damanik@rice.edu }
\thanks{D.\ D.\ was supported in part by NSF grant DMS--2054752}
\author{Tal Malinovitch}
\address{Department of Mathematics\\
  Rice University\\
 Houston, TX, 77005}
  \email[]{tal.malinovitch@rice.edu}
\author{Giorgio Young }
\address{Department of Mathematics\\
  The University of Michigan\\
 Ann Arbor, MI 48109}
  \email[]{gfyoung@umich.edu}
  \thanks{G.Y.\ acknowledges the support of the National Science Foundation through grant DMS--2303363.}
\maketitle

\maketitle

\begin{abstract}
In this article, we review some notions of ballistic transport from the mathematics and physics literature, describe their basic interrelations, and contrast them with other commonly studied notions of wave packet spread. 
\end{abstract}

\section{Introduction}\label{intro}
One of the important questions when considering a quantum mechanical system is to understand the dynamics of states under the evolution of the system. A quantum mechanical system is described, in this context, by the Hamiltonian, by which we mean the operator defined by
\begin{align}\label{Hamiltonian}
    H=-\Delta+V=H_0+V,
\end{align}
where $\Delta$ is the Laplacian (acting on $\bbZ^d$ or $\bbR^d$) and $H_0$ its self-adjoint realization, while $V$ is a real-valued function acting by multiplication, and we fix a domain so that $H$ is a self-adjoint operator. In this survey, we will take $V$ to be a bounded function, though some of the results may apply in more general settings. Jacobi matrices are one of the particularly well-studied generalizations of \eqref{Hamiltonian} in the discrete one-dimensional setting; we only mention these sporadically throughout the text, although much of our discussion is valid for them as well.\par 

Throughout the paper, we will write $\calH$ to denote our Hilbert space, be it $L^2(\bbR^d)$ or $\ell^2(\bbZ^d)$. The system's dynamics are described by the time-dependent Schr\"odinger equation:
\begin{align}\label{Schrod}
    i\partial_t\psi=H\psi,\;\psi(x,0)=\psi_0.
\end{align}
Using the spectral theorem, one may define the propagator $e^{-itH}$ and denote by $\psi_t=e^{-itH}\psi_0$ the solution to the initial value problem \eqref{Schrod}.  \par
One way to study the dynamics of a state under this evolution is to compare aspects of the dynamics to those of the free propagator, $e^{-itH_0}$- which is well understood. For example, in certain settings, one may compare the dynamics of a state $\psi$ directly to the free evolution by finding some other state $\phi$ such that
\begin{align*}
    \|e^{-itH_0}\phi - e^{-itH}\psi \|\rightarrow 0,
\end{align*}
where $\|\cdot \|$ is the norm on $\calH$. This idea leads to scattering theory and the study of wave operators. \par
Though in many circumstances, this approach suffices, in others, the dynamics do not allow for direct comparisons to the free evolution of scattering theory. This is true even for fundamental physical examples, such as a periodic system, i.e. $H$ with a periodic potential $V$. Then, one seeks to compare different characteristics of the full dynamics to those of the free one. One important characteristic is how fast an initial state spreads in space. There are many ways to measure this property. Many of them may be viewed as making use of the following heuristic: if one takes an initial state that is localized in a ball of radius $R$ and has momentum $\sim P$ and applies the free evolution, after time $t$, the state will be roughly localized in a ball of size $R+Pt$. \par
In this survey, we will focus on one particularly natural way to measure this, and that is through the presence of ballistic transport. Roughly, this means that for $Q(t)$ the  Heisenberg evolved position operator, $\frac{1}{t}Q(t)\psi$, tends to a nonzero state asymptotically. This may be considered, in some cases, as the claim that this state has a non-zero asymptotic momentum. Given an initially localized state $\psi_0$, the quantity $\|Q(t)\psi_0\|^2$ measures the variance of the probability distribution of the particle's position at time $t$, which is the first moment that could be used to measure the outward spread of this distribution in time. There are different, non-equivalent ways to formulate this, and we will give precise definitions and explore the relations between these definitions in Section \ref{def}. \par
This notion of ballistic transport is part of a spectrum of transport, where two other distinguished phenomena are quantum diffusion and the absence of transport. As mentioned above, ballistic transport implies that the wave packet spreads linearly in time,  $x\sim t$. One can contrast this with the behavior of an eigenfunction, that the evolution operator simply rotates, so that the wave packet does not spread, i.e., $x\sim 1$. In this case, we say that the state exhibits absence of transport\footnote{Not to be confused with an absence of \textit{ballistic} transport, see Remark~\ref{absenceremark}}. In between ballistic and absence of transport, we have quantum diffusion, where $x\sim t^{\frac{1}{2}}$, which is the generally expected behavior in disordered systems in higher dimensions, see, for example, \cite{stolz2011introduction} for more details. With this spectrum in mind, we aim to clarify the definitions of $x\sim t$ in the literature, akin to the work of \cite{del1995localization}, entitled \textit{What is Localization?} in the context of dynamical localization, which implies absence of transport. \par

While our focus in this survey is on ballistic transport, we will briefly discuss the connection between this notion of wave packet spread and three other common ways that capture this phenomenon: spatial localization (or escape probabilities), scattering methods, and dispersion. By spatial localization, not to be confused with spectral localization, we mean the measurement of the main support of the function: considering statements on the change in $\|\chi_A e^{-itH}\psi\|$, where $\chi_A $ is the indicator of a subset $A$ of the ambient space. One of the origins of localization is in the famous RAGE theorem \cite{AmreinGeorgescu, Ruelle}, which tracks the probability of finding the particle outside of compact sets for a large time, in some sense, measuring the $0$'th moment or mass of the probability density function. There, a profound connection was proven between the time decay of this quantity inside and outside of arbitrary compact sets and the continuity of the spectral measure of the initial state. Roughly, it states that a pure point state is always localized in some sense, and continuous states, on average, escape any confined set. This leads to another, softer way of measuring the spread of the set by restricting it to a set of the form $B_{r+vt}$, the ball of radius $r+vt$ for some $v$. Section \ref{Rage} will explore the connection between this notion and ballistic transport. \par
Next, in Section \ref{scattering}, we will discuss the connection between some notions in scattering theory and ballistic transport. As mentioned above, scattering theory allows one to compare the asymptotics, in some circumstances, of the evolution under the full operator to that of the free operator. As we will discuss, some of the tools developed in scattering theory also allow one to conclude certain types of ballistic transport within their scope. \par
In significantly more general settings, one may try to measure dispersion, typically by finding a dispersive estimate. In some sense, these estimates aim to compare the evolution of the full operator to the evolution of the behavior of a known operator, for example, the free one, through different $L^p$ norms. The propagator $e^{-itH}$ is unitary and therefore preserves the $L^2$ norm. A dispersive estimate is usually an $L^1 \rightarrow L^\infty $ norm estimate of the propagator, which allows interpolation for any $1\leq p\leq \infty$ norm. Roughly speaking, the $L^1$ norm measures the spread of the function, while the $L^\infty $ measures the function's peak. If there is dispersion, as there is for the free evolution, since the $L^2 $ mass is preserved, one expects the peak of the wave packet to decay like the square root of the volume of the support of the function. So, by this heuristic, we may expect the following dispersive estimate,
\begin{align*}
    \|e^{-itH}\psi \|_\infty \leq Ct^{-\frac{d}{2}}\|\psi\|_1,
\end{align*}
as the volumes increases like $t^d$, where $d$ is the ambient dimension. Section \ref{disp} will explore the connection between this notion and ballistic transport. There are many important results in all of these fields, which this margin is too narrow to contain. So, we will focus only on these results in the context of ballistic transport.\par
This paper is arranged as follows:
In Section \ref{def}, we give the necessary notation and define the different notions of ballistic transport present in the literature, while Section \ref{Connections} explores the relations between these different notions.
In Section \ref{past}, we give an overview of some of the major results on ballistic transport, from general results in Section \ref{general} relying on the continuity of the spectral measure to a brief discussion of some results on the other types of transport in Section~\ref{otherTransport}. Finally, we give a brief overview of the transport results for periodic, quasi-periodic, and limit-periodic $V$ in Section \ref{semiperiodic}.
Section \ref{DiffNotion} explores the connection between ballistic transport and other notions of wave packet spread in terms of escape probabilities in Section \ref{Rage}, scattering theory in Section~\ref{scattering}, and dispersive estimates in Section \ref{disp}. Finally, the last section, Section \ref{physics}, will review the different use of the term ``ballistic transport" in the physics literature in Section \ref{BTphysics}, and then give an overview of some results in the physics literature that use notions closer to those discussed in the mathematical community in Section \ref{MathBTPhysics}.

\subsection*{Acknowledgement} We thank Adam Black, Jake Fillman, Svetlana Jitomirskaya, Ilya Kachkovskiy, Milivoje Luki\'c, and Hermann Schulz-Baldes for helpful discussions about this work. In particular, Open Questions \ref{finitegap} and \ref{SYClass} are due to Jake Fillman.

\section{Mathematical Definitions}\label{def}

Throughout this paper, we will consider a system described by the Hamiltonian defined by \eqref{Hamiltonian}, acting on $\calH=L^2(\mathcal X)$, where $\mathcal X$ denotes the ambient space, that is, $\mathcal X=\bbR^d$ or $\mathcal X=\bbZ^d$.  We will limit our scope to considering a Hamiltonian $H$ that is self-adjoint. For most of this manuscript, we will take $V$ to be bounded, unless otherwise specified, and then for $\mathcal X=\mathbb R^d$, the Kato-Rellich theorem yields that $H$ is self-adjoint on $H^2(\mathbb R^d)$, and for $\calX=\bbZ^d$ we have that $H$ is self-adjoint on all of $\calH$ as a bounded operator. \par 

We may now define the different notions of ballistic transport: Let $Q$ be the position operator on $\calH$ 
\begin{align*}
    &Q:\calH\rightarrow\calH^d\\
    &Q\psi=\vec{q}\,\psi,
\end{align*}
where $\vec{q}=(q_1,\ldots,q_d)$, with domain
\begin{align*}
    &D(Q)=\left\{ \psi\in\calH:\|Q\psi\|<\infty  \right\}.
\end{align*}
On a similarly natural domain, we will also define the $p$-th power of the position operator
\begin{align*}
    |Q|^{p}\psi=\|\vec{q}\|^p\psi. 
\end{align*}
We will also denote by $|Q|$ the operator of multiplication by the Euclidean norm of $\vec q$.\par
For an operator $A$, we denote the Heisenberg-evolved operator by 
\begin{align*}
A(t):=e^{itH}Ae^{-itH}.
\end{align*}

We will also need to consider the momentum operator, which we will denote by $\Xi$, defined as
\begin{align*}
    \Xi=-i[Q, H_0]
\end{align*}
with its maximal domain. So, the domain of $\Xi$ in the discrete setting is $\calH$ (as the momentum operator is bounded by $2$), and in the continuous setting, it will be $H^1$. \par 

We also define the transport exponents for $\psi $ in the appropriate domain,
\begin{align*}
    \beta^+_\psi(p)= \limsup\limits_{t\rightarrow\infty }\frac{\log \||Q|^{\frac{p}{2}}(t)\psi\|^2}{p\log t},\;&\beta^-_\psi(p)=\liminf\limits_{t\rightarrow\infty }\frac{\log \||Q|^{\frac{p}{2}}(t)\psi\|^2}{p\log t}.
\end{align*}
We note that we define the transport exponent for all $p > 0$, as is common, particularly in the discrete setting. Matching the continuum literature, our definition below takes $p=2$. However, the definitions extend naturally to different moments, see discussion below Definition \ref{BTdef}. We also note that a weaker form of transport exponents is often defined in terms of the Abel average
\begin{align*}
    \braket{\braket{Q_\psi ^p}}_A(T)=\frac{2}{T}\int\limits_0^\infty e^{-\frac{2t}{T}}\||Q|^{\frac{p}{2}}(t)\psi\|^2\, dt
\end{align*} 
for suitable states $\psi$. We also define the corresponding averaged transport exponents: 
   \begin{align*}
        \tilde{\beta}^+_\psi(2)= \limsup\limits_{t\rightarrow\infty }\frac{\log \braket{\braket{Q_\psi^2}}_A}{2\log t},& \quad \tilde{\beta}^-_\psi(2)=\liminf\limits_{t\rightarrow\infty }\frac{\log\braket{\braket{Q_\psi^2}}_A}{2\log t},
    \end{align*}
and we have that in applications, Abel averaged transport exponents often agree with those defined through the, perhaps more natural, Ces\`aro average
\begin{align*}
    \braket{\braket{Q_\psi ^p}}_C(T)=\frac{1}{T}\int\limits_0^T \||Q|^{\frac{p}{2}}(t)\psi\|^2\, dt
\end{align*} 
cf. \cite{DamanikLenzStolz}.
With these definitions, we can define the following local notions of ballistic transport for a state $\psi $:
\begin{definition}\label{BTdef}
    \begin{enumerate}
    \item We will say that a state $\psi$ exhibits \emph{strong ballistic transport} if it has a nonzero asymptotic velocity: $\frac{Q(t)}{t}\psi \rightarrow P_\psi $ for $P_\psi$ is not the zero state. 
    \item We say that a state $\psi $ exhibits \emph{ballistic transport in the norm-growth sense} if ballistic upper and lower bounds exist: there exist constants $c,C>0$ such that 
    \begin{align}\label{NGBT}
       ct^2\leq \||Q|(t)\psi\|^2\leq Ct^2 .
    \end{align}
    \item We say that $\psi$ exhibits \emph{ballistic transport in the exponent sense} if we have that $\beta^{\pm}_\psi(2)=1$.
    \item We say that $\psi$ exhibits \emph{ballistic transport in the Abel average norm-growth sense} if there are constants $c, C>0$ such that 
    \begin{align*}
        cT^2\leq \braket{\braket{Q_\psi^2}}_A (T) \leq CT^{2}.
    \end{align*}
    \item We say that $\psi $ exhibits \emph{ballistic transport in the Abel average exponent sense} if $\tilde{\beta}^\pm_\psi(2)=1$.
\end{enumerate}
Each of these notions can be modified in several ways:
\begin{itemize}
    \item Averaged ballistic transport: We can consider each of the notions above when replacing the Heisenberg evolved operator $Q(t)$ with the Ces\`aro averaged operator
    \begin{align*}
        \braket{Q}(T)=\frac{1}{T}\int\limits_0^T Q(t) \, dt .
    \end{align*}
    We will denote the average transport exponent by $\braket{\beta}^\pm_\psi(p)$. 
    \item Different moments: In all the definitions above, we can consider different moments, extending the definition naturally. In the case of strong ballistic transport, we will replace  $Q(t)/t$ with the $p$-th moment, $|Q|^p(t)/t^p$ for $p>0$. We will refer to this as strong ballistic transport for the $p$-th moment. 
\end{itemize}
\end{definition}
Here are several remarks regarding these definitions:
\begin{remark}
    For dimension $d\geq 2$, we note that strong ballistic transport means the convergence of a vector in $\calH^d$, whereas the notions that follow essentially require upper and lower estimates on scalar quantities defined by the norms $\||Q|e^{-itH}\psi\|$.
\end{remark}
\begin{remark}\label{absenceremark}
    As mentioned in the introduction, quantum diffusion of a state will correspond to $\beta_\psi^\pm (2)=\frac{1}{2}$ in the exponent sense, or convergence of $\frac{Q(t)}{\sqrt{t}}$ to a non-zero finite asymptotic operator, in a strong sense. Similarly, the absence of transport will correspond to $\beta_\psi^+(2)=0$ in the exponent sense or convergence of $Q(t)$ to an asymptotic non-zero finite operator in the strong sense. Note that the absence of transport is different than the absence of ballistic transport, which corresponds to $\beta_\psi^\pm (2)< 1$, or its strong counterpart.
\end{remark}
\begin{remark}
    In the discrete setting, it is very common to prove estimates for the state $\delta_0$, as a representative initially localized state. See, for example, the definitions in \cite{DFESO1}.
\end{remark}
\begin{remark}
    It may also prove convenient to define ballistic transport in terms of the difference $Q(t)-Q(0)$. The advantage of such an approach (that was taken in, for example, \cite{KachkovskiyGe}) is that it may avoid domain consideration, particularly in the discrete setting. In this survey, we will work with the definitions above, and we make domain considerations explicit. 
\end{remark}
In addition to these notions, we have another common definition of ballistic transport based on strong resolvent convergence:
\begin{definition}\label{resolvent}
    We will say that we have a \emph{strong resolvent ballistic transport} if there is a self-adjoint operator $P$, such that $\frac{Q(t)}{t}\rightarrow P$, in the strong resolvent sense, and that $\ker (P)\cap D(Q)=\{0\}$. 
\end{definition}
We note that once one has proven strong ballistic transport on a set $W$, the rule $\psi\to P_\psi$ defines a possibly unbounded symmetric linear operator with domain $W$, and it is then natural to ask if one can conclude that this operator is essentially self-adjoint, and if strong resolvent convergence holds. We give a criterion for this below.

We also note that while the notions described in Definition~\ref{BTdef} are all local, that is, they describe a state's quantum evolution and can be defined on a single state a priori (though that might not be an interesting result), whereas strong resolvent convergence is global in the sense that it says something about all the states in $D(Q)$ at once.

\section{Basic Results and Connection Between Different Notions of Ballistic Transport}\label{Connections}

In this section, we will cover some of the known results that hold in high generality as well as prove the simple implications between the different notions of ballistic transport defined above. We emphasize that in this section- due to the generality sought- we would not assume that the potential is bounded unless otherwise specified.  \par
We note that strong resolvent convergence does not immediately imply strong ballistic transport in the above sense. However, we give a necessary criterion for the implication to hold. \par
We show that the following picture holds (see Figure \ref{Illustration}):\par
\begin{figure}[h!]
\centering
    \begin{tikzpicture}[scale=1]
        \pgfmathsetmacro {\sca }{ 2}
        \pgfmathsetmacro {\shift }{0.5}
         \draw[black](-4,3*\sca) circle (0pt)node[anchor=south] {Strong Resolvent BT};
         \draw[black](0,3.5*\sca) circle (0pt)node[anchor=center] {$+$ common core};
         \draw[black](-5,2.25*\sca)  circle (0pt)node[anchor=center] {$+\ker (P)\cap D(Q)=\{0\}$};
         \draw[black](4,3*\sca) circle (0pt)node[anchor=south] {Strong BT};
         \draw[black](4,2*\sca) circle (0pt)node[anchor=south] {Norm Growth};
         \draw[black](0,\sca) circle (0pt)node[anchor=south] { BT Exponent};
         \draw[black](3,\sca) circle (0pt)node[anchor=south] {Abel Average BT};
         \draw[black](0,0) circle (0pt)node[anchor=south] {Abel Average BT Exponent};
         \draw[black,thick,->] (4,3*\sca) -- (4,2*\sca+\shift) ;
         \draw[black,thick,->] (4,2*\sca) -- (3,\sca+\shift) ;
         \draw[black,thick,->] (4,2*\sca) -- (0,\sca+\shift) ;
         \draw[black,thick,dashed, ->] (-4,3*\sca) -- (0,\sca+\shift) ;
        \draw[black,thick,->] (0,\sca) -- (0,0+\shift)  ;
        \draw[black,thick,->] (3,\sca) -- (0,0+\shift) ;
         \draw[black,thick,dashed, ->] (4-2*\shift,3*\sca+\shift/2+0.1) -- (-4+4*\shift,3*\sca+\shift/2+0.1) ;
    \end{tikzpicture}
    \caption{An illustration of which notion implies which notion in full generality, for a fixed moment. Dashed arrows represent implications that require additional assumptions, which are written next to the arrow. BT stands for ballistic transport. Note that the horizontal arrow passes from a condition on a state to one on the level of operators; proofs of this condition proceed by proving strong ballistic transport on a subset of states that forms a common core.} \label{Illustration}
\end{figure}
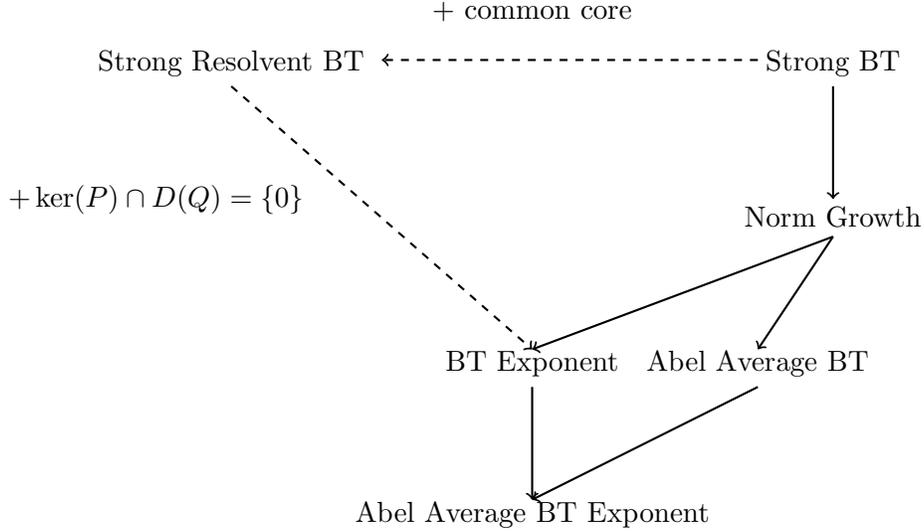
\begin{proposition}\label{Implications}
    Let $H$ be as above, then we have the following:
    \begin{enumerate}
        \item In the presence of ballistic upper bounds, strong resolvent convergence implies transport in the exponent sense. More precisely, if for all $\psi \in W$, for $W\subset D(Q^p)$ some subset, $\beta_\psi^\pm(p)\leq 1$, strong resolvent ballistic transport with $\ker(P)\cap W=\{ 0\}$ implies $\beta_\psi ^{\pm}(p)=1$ for all $\psi \in W$.\label{SRImpliesTE}
        \item Strong ballistic transport implies ballistic transport in the norm-growth sense.\label{StrongToNorm}
        \item Norm-growth ballistic transport implies ballistic transport in exponent sense.\label{NormToExp}
        \item Norm-growth ballistic transport implies ballistic transport in the norm-growth Abel average sense.
        \item Ballistic transport in the norm-growth Abel average sense implies ballistic transport in the Abel average exponent sense.
        \item Ballistic transport in the exponent sense implies ballistic transport in the Abel and Ces\`aro average exponent sense. 
        \item Any notion of ballistic transport implies the analogous notion for the averaged ballistic transport. 
    \end{enumerate}
\end{proposition}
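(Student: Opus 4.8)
The plan is to group the seven implications into the elementary ``ladder'' steps (items~\ref{StrongToNorm}, \ref{NormToExp}, and the two Abel-average steps~(4),~(5)), which reduce to substitution and a logarithmic squeeze, and the three steps carrying genuine analytic content: item~\ref{SRImpliesTE}, the exponent-averaging step~(6), and the transfer to time averages~(7).

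For the ladder I would argue directly. In item~\ref{StrongToNorm}, strong ballistic transport is norm convergence $t^{-1}Q(t)\psi\to P_\psi$, so $t^{-2}\||Q|(t)\psi\|^{2}=\|t^{-1}Q(t)\psi\|^{2}\to\|P_\psi\|^{2}=:a^{2}>0$; thus for all large $t$ one has $\tfrac{a^{2}}{2}t^{2}\le\||Q|(t)\psi\|^{2}\le 2a^{2}t^{2}$, the bounds on any compact $t$-interval bounded away from $0$ following from continuity of $t\mapsto\||Q|(t)\psi\|^{2}$ (the ratio blows up only as $t\to0^{+}$, so the norm-growth bounds are understood asymptotically). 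Item~\ref{NormToExp} inserts $ct^{2}\le\||Q|(t)\psi\|^{2}\le Ct^{2}$ into $\beta^{\pm}_\psi(2)$ and divides by $2\log t$, squeezing the quotient between $1+\tfrac{\log c}{2\log t}$ and $1+\tfrac{\log C}{2\log t}$, both $\to1$. The norm-growth$\,\to\,$Abel step~(4) substitutes the same two-sided bound into $\braket{\braket{Q_\psi^{2}}}_A(T)$ and uses $\frac{2}{T}\int_0^\infty e^{-2t/T}t^{2}\,dt=\frac{T^{2}}{2}$ to get $\tfrac{c}{2}T^{2}\le\braket{\braket{Q_\psi^{2}}}_A(T)\le\tfrac{C}{2}T^{2}$ (if the pointwise bounds hold only for large $t$, the fixed initial interval contributes $O(1/T)$); step~(5) is then the logarithmic squeeze of item~\ref{NormToExp} with $T$ in place of $t$.

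The first substantive step is item~\ref{SRImpliesTE}, where the upper bound $\beta^{\pm}_\psi(p)\le1$ is assumed and only $\beta^{-}_\psi(p)\ge1$ must be extracted from $t^{-1}Q(t)\to P$ in strong resolvent sense. I would work with spectral projections of the self-adjoint operators $t^{-1}Q(t)$: since $\psi\notin\ker P$ its $P$-spectral measure is not concentrated at $0$, so there is an $\epsilon>0$, chosen to avoid the at most countably many atoms, with $\|\chi_{\{|x|\ge\epsilon\}}(P)\psi\|^{2}=:m>0$. Strong resolvent convergence upgrades, at such a continuity level, to $\chi_{\{|x|\ge\epsilon\}}(t^{-1}Q(t))\psi\to\chi_{\{|x|\ge\epsilon\}}(P)\psi$, whence $\|\chi_{\{|Q(t)|\ge\epsilon t\}}\psi\|^{2}\ge m/2$ for large $t$; the operator inequality $|Q(t)|^{p}\ge(\epsilon t)^{p}\chi_{\{|Q(t)|\ge\epsilon t\}}$ then gives $\||Q|^{p/2}(t)\psi\|^{2}=\langle\psi,|Q(t)|^{p}\psi\rangle\ge(\epsilon t)^{p}m/2$, so $\beta^{-}_\psi(p)\ge1$. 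The delicate point here is passing from strong resolvent convergence (which a priori only controls bounded continuous functions of the operators) to convergence of the sharp cutoff; I would handle this by sandwiching $\chi_{\{|x|\ge\epsilon\}}$ between continuous functions and using that the limiting spectral measure charges no mass to the level $\epsilon$.

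The exponent-averaging step~(6) is a Laplace/Tauberian computation: from $\beta^{\pm}_\psi(2)=1$ I extract, for each $\delta>0$ and all large $t$, the bounds $t^{2-\delta}\le\||Q|(t)\psi\|^{2}\le t^{2+\delta}$, substitute into the Abel integral via $\frac{2}{T}\int_0^\infty e^{-2t/T}t^{2\pm\delta}\,dt=\Gamma(3\pm\delta)2^{-(2\pm\delta)}T^{2\pm\delta}$ and into the Ces\`aro integral via $\frac{1}{T}\int_0^{T}t^{2\pm\delta}\,dt\sim\frac{T^{2\pm\delta}}{3\pm\delta}$, discard the integrable initial interval (contributing $O(1/T)$, using $\psi\in D(|Q|)$), and let $\delta\to0$ to obtain $\tilde\beta^{\pm}_\psi(2)=1$ and its Ces\`aro analogue. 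Finally, for item~(7), strong ballistic transport transfers cleanly: writing $Q(t)\psi=t\,(t^{-1}Q(t)\psi)$ with $t^{-1}Q(t)\psi\to P_\psi$, a standard Ces\`aro argument gives $T^{-1}\braket{Q}(T)\psi=T^{-2}\int_0^{T}Q(t)\psi\,dt\to\tfrac12 P_\psi\ne0$; and for every notion the \emph{upper} bounds transfer at once from $\|\braket{Q}(T)\psi\|\le\frac{1}{T}\int_0^{T}\|Q(t)\psi\|\,dt$. The main obstacle I anticipate is the averaged \emph{lower} bounds in item~(7): time-averaging the vector operator $Q(t)$ can in principle produce cancellation, so a pointwise lower bound such as $\||Q|(t)\psi\|^{2}\ge ct^{2}$ need not survive on its own. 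I expect to defeat this either by reducing to the strong-ballistic case, where the asymptotic velocity fixes a direction and precludes cancellation, or by exploiting that the velocity $\tfrac{d}{dt}Q(t)\psi=\Xi(t)\psi$ is controlled, which limits how fast the direction of $Q(t)\psi$ can rotate.
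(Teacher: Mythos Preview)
Your argument for items~(2)--(5) is essentially identical to the paper's, and your explicit Tauberian sketch for item~(6) fills in what the paper outsources to \cite{damanik2010general}. The two places where your route genuinely diverges are items~(1) and~(7).

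For item~(1), the paper avoids your spectral-cutoff sandwich entirely: instead of $\chi_{\{|x|\ge\epsilon\}}$ it applies the bounded \emph{continuous} function $g_C(x)=(\min(|x|,C))^p$ directly, so that strong resolvent convergence immediately yields $\langle\psi,g_C(t^{-1}Q(t))\psi\rangle\to\langle\psi,g_C(P)\psi\rangle$, and then lets $C\uparrow\infty$ by monotone convergence to obtain $\liminf_{t}\langle\psi,|t^{-1}Q(t)|^{p}\psi\rangle\ge\langle\psi,|P|^{p}\psi\rangle>0$. This buys exactly what you work for with the atom-avoidance and continuous-squeeze step, but in one line; your approach is correct but needs the extra lemma that $\chi_{\{|x|\ge\epsilon\}}(A_t)\psi\to\chi_{\{|x|\ge\epsilon\}}(A)\psi$ at continuity levels of the limiting measure.

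For item~(7), you compute the averaged strong limit correctly as $\tfrac12 P_\psi$; the paper writes the estimate as though the limit were $P_\psi$, which is a slip, though the conclusion (nonzero limit) is unaffected. More interestingly, the cancellation obstacle you flag for the non-strong notions is real and is not resolved in the paper either: it simply asserts ``A similar proof will work for the other notions as well'' based on the identity $\braket{Q}(T)-P=\frac{1}{T}\int_0^T(Q(t)-P)\,dt$ without addressing how a lower bound on $\|Q(t)\psi\|$ alone survives the vector average. Your instinct to control the rotation via $\tfrac{d}{dt}Q(t)=\Xi(t)$ is the natural avenue, but neither your sketch nor the paper closes this gap.
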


\begin{remark}
    We note that in applications, due to \cite{RadinSimon}, $D(Q)\cap H^1(\mathbb R^n)\subseteq \bigcap_{t\geq 0}D(Q(t))$. Instead, in the discrete setting, boundedness of the momentum operator yields $D(Q(t))=D(Q)$.\par
    In addition, this result and its generalization in \cite{boutetdemonvel2023ballistic} implies ballistic upper bounds of all moments in many cases, as long as the potential is smooth enough. 
\end{remark}
\begin{proof}
\begin{enumerate}
   \item The following argument appears in \cite{damanik2015quantum}, but we add it here for completeness. Let $g_C(x)=(\min(|x|,C))^p$. Then, since $g_C$ is a bounded continuous function, \cite[Theorem VIII.20]{RSVol1} and  strong resolvent convergence imply
   \begin{align*}
    \lim_{t\to\infty} \left\langle \psi,g_C\left(\frac{Q(t)}{t}\right)\psi\right \rangle=
    \langle \psi,g_C(P)\psi\rangle.
   \end{align*}
    Since $g_C(x)$ is increasing in $C$ and $g_C(x)\to |x|^p$, we have, after taking $C\to\infty$,
    \begin{align*}
        \liminf_{t\to\infty} \left\langle \psi,\left(\frac{|Q(t)|}{t}\right )^p\psi\right \rangle\geq 
    \langle \psi,|P|^p\psi\rangle,
    \end{align*}
    and this final quantity is nonzero for any $\psi$ since $\ker(P)=\{0\}$. 
    \item Assume that we have ballistic transport in the strong sense, then it follows that
    \begin{align*}
        \lim \limits_{t\rightarrow\infty} \frac{1}{t^2}\||Q|(t)\psi\|^2=\|P_\psi\|^2.
    \end{align*}
    Let $\|P_\psi\|^2/2>\epsilon>0$ (as we have that $P_\psi\neq 0$). Then there is $R>0$ such that for all $t>R$, we have 
    \begin{align*}
        t^2(\|P_\psi\|^2-\epsilon)\leq \|Q(t)\psi\|^2\leq t^2(\|P_\psi\|^2+\epsilon),
    \end{align*}    
    so that by inflating and contracting the constants in the upper and lower bounds respectively, the result follows.
    \item Assume that we have ballistic transport in the norm-growth sense. Taking $\log$'s on both sides of \eqref{NGBT}, we have  
    \begin{align*}
       \log(c)+2\log(t)\leq \log\||Q|(t)\psi \|^2\leq \log(C)+2\log(t).
    \end{align*}
    Dividing by $2\log(t)$ and taking $t\to\infty$, we have $\beta^\pm_\psi(2)=1$, as needed. 
    \item Assume that we have norm-growth ballistic transport. Multiplying by $\frac{2e^{-2t/T}}{T}$ on both sides of \eqref{NGBT} and integrating by parts, we find
    \begin{align*}
        \frac{cT^2}{2}\leq \frac2T\int_0^\infty e^{-2t/T}\||Q|(t)\psi\|^2dt \leq \frac{CT^2}{2},
        \end{align*}
        so that by inflating and contracting the constants in the upper and lower bounds, respectively, the result follows.
    
    \item This comes from a similar argument as in the proof of (\ref{StrongToNorm}). 
    \item This result is fairly general, see \cite[Theorem 2.20]{damanik2010general}.
    \item We will show it only for $\psi$ with strong ballistic transport. A similar proof will work for the other notions as well. All proofs are based on the fact that 
    \begin{align*}
        \braket{Q}-P=\frac{1}{T}\int_0^TQ(t)-P\, dt,
    \end{align*}
      Let $\epsilon>0$ be given. Then there is some $t_0>0$ such that for all $t>t_0$,
    \begin{align*}
        \Big\|\Big(\frac{Q(t)}{t}-P\Big)\psi\Big\|<\epsilon.
    \end{align*}
    Then we have that 
    \begin{align*}
        \Big\|\Big(\frac{\braket{Q}(T)}{T} -P\Big)\psi\Big\|&\leq \frac{1}{T^2}\int\limits_0^T\|(Q(t)-P)\psi\|\, dt \\
        &=\frac{1}{T^2}\int\limits_0^{t_0}\|(Q(t)-P)\psi\|\, dt+\frac{1}{T^2}\int\limits_{t_0}^T\|(Q(t)-P)\psi\|\, dt\\
        &\leq \frac{t_0}{T^2}\sup\limits_{0\leq t\leq t_0} \|(Q(t)-P)\psi\|+\frac{\epsilon(T-t_0)}{T^2}\rightarrow 0
    \end{align*}
    as $T\rightarrow\infty $, as needed. A similar idea will yield the other implications. 
\end{enumerate}
\end{proof}

\begin{remark}
    We note that all implications of Proposition \ref{Implications} (except \ref{SRImpliesTE}) are written for the second moment but generalize for any moment as seen from the proof. 
\end{remark}

In addition, we have the following conditional implication:

\begin{proposition}
     Let $H$ be as above, let $W\subset D(Q)$, and let $Q(t)$ exhibit strong resolvent ballistic transport: $\frac{Q(t)}{t}\to P$. Then, for 
     \begin{align*}
         \psi\in \bigcap_{t\geq 0}D(Q(t))\cap D(P),
     \end{align*}
     $\frac{Q(t)}{t}\psi\to P\psi$ if and only if 
     \begin{align}
     \label{condition}\lim\sup_{t\to\infty}\left\|\frac{Q(t)}{t}\psi\right\|\leq \|P\psi\|.
     \end{align}
     In other words, the set \begin{align*}
     W_{max}:= \left\{\psi\in \bigcap_{t\geq 0}D(Q(t))\cap D(P): \limsup_{t\to\infty}\left\|\frac{Q(t)}{t}\psi\right\|\leq \|P\psi\|\right\}
     \end{align*}
     is maximal in the sense that for any set $W\subseteq \cap_{t\geq 0}D(Q(t))\cap D(P)$ on which there is strong ballistic transport, $W\subseteq W_{max}$.
\end{proposition}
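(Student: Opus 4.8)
The plan is to treat the two directions very asymmetrically. The forward (``only if'') direction is immediate: if $\frac{Q(t)}{t}\psi\to P\psi$ strongly, then $\|\frac{Q(t)}{t}\psi\|\to\|P\psi\|$, so in particular \eqref{condition} holds (with equality). All the content is in the reverse implication, and the guiding idea is that strong resolvent convergence \emph{already} supplies the matching lower bound, so the hypothesis \eqref{condition} is exactly what is needed to pin the norms and force strong convergence.

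First I would record the liminf bound. Exactly as in the proof of Proposition~\ref{Implications}(\ref{SRImpliesTE}) with $p=2$, using $g_C(x)=\min(x^2,C^2)$ (bounded and continuous) together with \cite[Theorem VIII.20]{RSVol1} and $g_C(\frac{Q(t)}{t})\le \frac{|Q(t)|^2}{t^2}$, $g_C(P)\uparrow|P|^2$, one gets $\liminf_{t\to\infty}\|\frac{Q(t)}{t}\psi\|\ge\|P\psi\|$. Combined with \eqref{condition}, this already yields the full norm convergence $\lim_{t\to\infty}\|\frac{Q(t)}{t}\psi\|=\|P\psi\|$.

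Next I would upgrade norm convergence plus strong resolvent convergence to strong convergence. Writing $\mu_t$ for the spectral measure of a scalar component of $\frac{Q(t)}{t}$ at $\psi$ and $\mu_\infty$ for that of $P$, the previous step reads $\int x^2\,d\mu_t\to\int x^2\,d\mu_\infty<\infty$, while \cite[Theorem VIII.20]{RSVol1} gives $\int g_C\,d\mu_t\to\int g_C\,d\mu_\infty$ for each fixed $C$. \textbf{The crucial estimate, which I expect to be the main obstacle, is the resulting uniform integrability of the tails.} Since $x^2\le 2(x^2-C^2)$ for $|x|>\sqrt2\,C$,
\[
\limsup_{t\to\infty}\int_{|x|>\sqrt2\,C}x^2\,d\mu_t \le 2\limsup_{t\to\infty}\left(\int x^2\,d\mu_t-\int g_C\,d\mu_t\right)=2\int (x^2-C^2)_+\,d\mu_\infty,
\]
and the right-hand side tends to $0$ as $C\to\infty$ by dominated convergence, since $\psi\in D(P)$ gives $\int x^2\,d\mu_\infty<\infty$. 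With this uniform tail bound in hand, strong convergence follows from a routine $\epsilon/3$ argument: setting $h_\Lambda(x)=x$ for $|x|\le\Lambda$ and $h_\Lambda(x)=\Lambda\,\mathrm{sgn}(x)$ otherwise, one estimates $\|\frac{Q(t)}{t}\psi-P\psi\|$ by $\|(\frac{Q(t)}{t}-h_\Lambda(\frac{Q(t)}{t}))\psi\|+\|h_\Lambda(\frac{Q(t)}{t})\psi-h_\Lambda(P)\psi\|+\|(h_\Lambda(P)-P)\psi\|$, where the middle term vanishes as $t\to\infty$ for fixed $\Lambda$ by \cite[Theorem VIII.20]{RSVol1}, the first is controlled uniformly by the tail bound, and the last vanishes as $\Lambda\to\infty$.

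For $d\ge2$ I would reduce to the scalar case componentwise: the liminf bound holds for each component $Q_j(t)/t$, so $\liminf_t\|\frac{Q_j(t)}{t}\psi\|^2\ge\|P_j\psi\|^2$, and combined with convergence of the total norm $\sum_j\|\frac{Q_j(t)}{t}\psi\|^2\to\sum_j\|P_j\psi\|^2$ this forces componentwise norm convergence, after which the scalar argument applies to each $j$ and the squares are summed. Finally, for maximality, given any $W\subseteq\bigcap_{t\ge0}D(Q(t))\cap D(P)$ with strong ballistic transport, I would first identify the limit with $P\psi$: if $\frac{Q(t)}{t}\psi\to P_\psi$ strongly, applying strong resolvent convergence to the identity $\psi=(\frac{Q(t)}{t}+i)^{-1}(\frac{Q(t)}{t}+i)\psi$ and passing to the limit (the resolvents converge strongly with norm $\le1$, and $(\frac{Q(t)}{t}+i)\psi\to P_\psi+i\psi$) gives $(P+i)\psi=P_\psi+i\psi$, i.e.\ $P_\psi=P\psi$. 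Then strong convergence yields $\limsup_t\|\frac{Q(t)}{t}\psi\|=\|P\psi\|$, so \eqref{condition} holds and $\psi\in W_{max}$, proving $W\subseteq W_{max}$.
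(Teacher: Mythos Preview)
Your argument is correct, but it is considerably more elaborate than the paper's. The paper bypasses the uniform integrability and truncation machinery entirely by first establishing \emph{weak} convergence $\frac{Q(t)}{t}\psi\rightharpoonup P\psi$ directly from the second resolvent identity: writing $\phi=(P-\bar z)^{-1}\tilde\phi$ and $\psi=(\frac{1}{t}Q(t)-z)^{-1}\tilde\psi$, one has $\langle\phi,(Q(t)/t-P)\psi\rangle=\langle\tilde\phi,((P-z)^{-1}-(\frac{1}{t}Q(t)-z)^{-1})\tilde\psi\rangle\to 0$ by strong resolvent convergence. With weak convergence in hand, the expansion $\|(\frac{Q(t)}{t}-P)\psi\|^2=\|\frac{Q(t)}{t}\psi\|^2-2\Re\langle\frac{Q(t)}{t}\psi,P\psi\rangle+\|P\psi\|^2$ immediately gives $\limsup_t\|(\frac{Q(t)}{t}-P)\psi\|^2=\limsup_t\|\frac{Q(t)}{t}\psi\|^2-\|P\psi\|^2$, which vanishes iff \eqref{condition}. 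So the paper's route is essentially ``weak convergence plus norm bound implies strong convergence'' in one line, whereas you recover the same conclusion via spectral measures, tail estimates, and an $\epsilon/3$ truncation argument. Your approach has the merit of reusing the cut-off lemma from Proposition~\ref{Implications}(\ref{SRImpliesTE}) and of making the componentwise reduction for $d\ge2$ explicit; you also carefully identify the a priori limit $P_\psi$ with $P\psi$ in the maximality statement, which the paper leaves implicit. But the paper's weak-convergence shortcut is both shorter and conceptually cleaner.
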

\begin{proof}
    Fix $\phi,\psi\in \cap_{t\geq 0}D(Q(t))\cap D(P)$. Then, for $z\in \mathbb C\setminus \mathbb R$, we may write $\phi=(P-\overline{z})^{-1}\tilde \phi$ and $\psi=(\frac{1}{t}Q(t)-z)^{-1}\tilde\psi$. Then, by the second resolvent identity, 
   \begin{align*}
        \langle \phi,(Q(t)/t-P)\psi\rangle &=\langle \tilde \phi,(P-z)^{-1}(Q(t)/t-P)(\frac{1}{t}Q(t)-z)^{-1}\tilde \psi\rangle\\
        &=\langle \tilde \phi,((P-z)^{-1}-(\frac{1}{t}Q(t)-z)^{-1})\tilde \psi\rangle\to 0
   \end{align*}
    as $t\to \infty$ by strong resolvent convergence. Thus, since 
    \begin{align*}
        \left\|\left(\frac{Q(t)}{t}-P\right)\psi\right\|^2=\left\|\frac{Q(t)}{t}\psi\right\|^2-2\Re(\langle \frac{Q(t)}{t}\psi,P\psi \rangle)+\|P\psi\|^2
    \end{align*}
    we have
    \begin{align*}
    \limsup_{t\to\infty}\left\|(\frac{Q(t)}{t}-P)\psi\right\|^2=\limsup_{t\to\infty}\left \|\frac{Q(t)}{t}\psi\right\|^2-\|P\psi\|^2,
    \end{align*}
    which vanishes if and only if $\limsup_{t\to\infty}\|Q(t)/t\psi\|\leq \|P\psi\|$.
\end{proof}

In addition, we have the following theorem, sometimes used to find ballistic lower bounds on all moments in the discrete setting \cite{damanik2015quantum,Fillman}:
\begin{theorem}[\cite{RSVol1} Theorem VIII.25]
    Let $H$ be such that that $D $ is a common core for all $\frac{Q(t)}{t}$, and assume that every $\psi \in D$ exhibits ballistic transport in the strong sense. Then $H$ has ballistic transport in the strong resolvent sense. 
\end{theorem}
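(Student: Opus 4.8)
The plan is to manufacture the candidate limiting operator directly from the pointwise data and then feed it into \cite{RSVol1}. Define $P_0$ on the core $D$ by
\[
P_0\psi := P_\psi = \lim_{t\to\infty}\frac{Q(t)}{t}\psi,
\]
which exists for every $\psi\in D$ by the strong ballistic transport hypothesis and is manifestly linear. First I would check that $P_0$ is symmetric: for $\psi,\phi\in D\subseteq D\big(\tfrac{Q(t)}{t}\big)$, each $\frac{Q(t)}{t}$ is self-adjoint (a unitary conjugate of the self-adjoint $Q/t$), so passing to the limit in
\[
\Big\langle \frac{Q(t)}{t}\psi,\phi\Big\rangle = \Big\langle \psi,\frac{Q(t)}{t}\phi\Big\rangle
\]
gives $\langle P_0\psi,\phi\rangle = \langle \psi,P_0\phi\rangle$. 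Hence $P_0$ is densely defined and symmetric, therefore closable, and I set $P:=\overline{P_0}$. The proof then reduces to three points: that $P$ is self-adjoint, that $\frac{Q(t)}{t}\to P$ in the strong resolvent sense, and that $\ker(P)\cap D(Q)=\{0\}$.

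The main obstacle will be establishing that $P_0$ is essentially self-adjoint, and this is precisely where the hypothesis that $D$ is a common core for all $\frac{Q(t)}{t}$ must be used, since bare pointwise convergence on a symmetric domain does not by itself control the deficiency indices. The route I would take is through the strong graph limit $\widehat P$ of the family $\frac{Q(t)}{t}$: one verifies that $\widehat P$ is symmetric (the inner-product computation above survives the diagonal limit along approximating sequences) and that $P_0\subseteq\widehat P$, so, as $\widehat P$ is closed, $P\subseteq\widehat P\subseteq\widehat P^{\,*}\subseteq P_0^{*}$. Essential self-adjointness is exactly the statement that this chain collapses, i.e. $\overline{P_0}=P_0^{*}$, equivalently that $\operatorname{ran}(P_0\pm i)$ is dense. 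The common-core assumption enters here: for each fixed $t$ the self-adjointness of $\frac{Q(t)}{t}$ makes $\operatorname{ran}\big((\tfrac{Q(t)}{t}\pm i)|_D\big)$ dense, and I would transfer this density to $P_0\pm i$ in the limit to close the deficiency gap. I expect this transfer --- making the per-$t$ surjectivity survive as $t\to\infty$ without the approximating vectors escaping --- to be the genuinely delicate step, and it is what makes the theorem a usable criterion rather than an automatic fact in the discrete examples of \cite{damanik2015quantum,Fillman}.

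Once $P=\overline{P_0}$ is known to be self-adjoint, $D$ is by construction a core for $P$, it is a common core for all $\frac{Q(t)}{t}$ by hypothesis, and $\frac{Q(t)}{t}\psi\to P\psi$ for every $\psi\in D$; Theorem~VIII.25 of \cite{RSVol1} then yields $\frac{Q(t)}{t}\to P$ in the strong resolvent sense. It remains to verify the nondegeneracy $\ker(P)\cap D(Q)=\{0\}$: for every nonzero $\psi\in D$ the asymptotic velocity $P\psi=P_\psi$ is nonzero by the definition of strong ballistic transport, so $P$ has no kernel on the dense core $D$, and I would upgrade this to the stated intersection with $D(Q)$ by a density argument. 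This completes the verification that $H$ exhibits ballistic transport in the strong resolvent sense of Definition~\ref{resolvent}.
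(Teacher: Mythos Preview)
The paper gives no proof here; it simply cites Reed--Simon~VIII.25 as a black box. You have correctly unpacked what that citation actually requires and what it does not: VIII.25 assumes the limit operator $P$ is already self-adjoint and that $D$ is a core for $P$, neither of which is explicitly hypothesized in the paper's statement. So you are right that essential self-adjointness of $P_0$ is a genuine gap, not something VIII.25 hands you.

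However, your proposed mechanism for closing that gap does not work in general. Density of $\operatorname{ran}\big((\tfrac{Q(t)}{t}\pm i)\big|_D\big)$ for each fixed $t$ does not transfer to density of $\operatorname{ran}(P_0\pm i)$ under pointwise convergence on $D$: the preimages solving $(\tfrac{Q(t)}{t}\pm i)\psi_t=\phi$ need not stay bounded in graph norm as $t\to\infty$, and without that there is nothing forcing a limit to exist in $D(P_0)$. Indeed, strong graph limits of self-adjoint operators can fail to be self-adjoint (cf.\ Reed--Simon~VIII.26, where self-adjointness of the graph limit is an \emph{extra} hypothesis equivalent to strong resolvent convergence). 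In the applications the paper has in mind \cite{damanik2015quantum,Fillman}, this issue is handled not by your transfer argument but by the limit $P$ being explicitly identified (and bounded in the discrete setting), so essential self-adjointness is immediate.

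Your kernel argument has the same defect. Knowing $P_0\psi\neq 0$ for every nonzero $\psi\in D$ does not upgrade by density to $\ker(P)\cap D(Q)=\{0\}$: given $\psi\in\ker(P)$ and $\psi_n\in D$ with $\psi_n\to\psi$, $P\psi_n\to 0$, the nonvanishing of each $P\psi_n$ gives no contradiction. In the cited applications this is again handled by explicit knowledge of $P$, not by density. So the honest reading is that the paper's statement is informal: VIII.25 supplies the strong resolvent convergence \emph{once} $P$ is known to be self-adjoint with core $D$ and trivial kernel on $D(Q)$, and those facts are established model by model.
\par
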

Naturally, the picture described above (and can be seen in Figure \ref{Illustration}) is missing the conditions that will allow us to ``go up the ladder," that is, get the reverse implications:
\begin{question}
    Under what condition on $H$ do we have the following implication:
    \begin{enumerate}
        \item If a state exhibits ballistic transport in the Abel average sense, then it exhibits ballistic transport in the strong sense.
        \item If a state exhibits ballistic transport in the exponent sense, then it exhibits ballistic transport in the strong sense.
        \item If a state exhibits ballistic transport in the Abel average exponent sense, then it exhibits ballistic transport in the exponent or Abel average sense.
    \end{enumerate}
\end{question}
In order to show the relation of any given notion for the different moments, we first need to introduce one of the fundamental  results of ballistic transport --- the Radin-Simon Theorem \cite{RadinSimon}.\par
This theorem gives a very general ballistic upper bound and shows the preservation of the domain $D(Q)$ under the propagator $e^{-itH}$ on $L^2(\mathbb R^d)$ under very general assumptions. 
\begin{theorem}[Radin-Simon]\label{Radin-Simon}
    Let $V$ be $H_0$-form bounded with a relative bound less than $1$. We define the space 
    \begin{align*}
        S_1=\{\psi \in \calH : Q\psi \in \calH, \Xi\psi \in \calH\},
    \end{align*}
    equipped with the norm
    \begin{align*}
        \|\psi\|_{S_1}=\sqrt{\|\psi\|^2+\|\frac{1}{2}\Xi\psi\|^2+\|Q\psi\|^2}.
    \end{align*}
    Then for $\psi \in D(Q)$, $e^{-itH}\psi \in D(Q)$, and we have that there is some $C>0$ such that for all $t\in \bbR$,
    \begin{align*}
        \|Qe^{-itH}\psi\|\leq C(1+|t|)\|\psi\|_{S_1}.
    \end{align*}
\end{theorem}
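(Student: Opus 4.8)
The plan is to identify the velocity operator with the Heisenberg-evolved momentum, integrate it in time, and then control that momentum uniformly in time using the form bound together with conservation of energy. First I would record the (formal) commutator computation. Since $V$ acts by multiplication it commutes with $Q$, so
\begin{align*}
\frac{d}{dt}Q(t)=e^{itH}\,i[H,Q]\,e^{-itH}=e^{itH}\,i[H_0,Q]\,e^{-itH}=e^{itH}\Xi e^{-itH}=:\Xi(t),
\end{align*}
using $\Xi=-i[Q,H_0]$. Integrating this identity yields the representation $Q(t)\psi=Q\psi+\int_0^t\Xi(s)\psi\,ds$, and hence
\begin{align*}
\|Q(t)\psi\|\leq \|Q\psi\|+\int_0^t\|\Xi(s)\psi\|\,ds .
\end{align*}
Thus the whole estimate reduces to a uniform-in-$s$ bound on $\|\Xi(s)\psi\|$, after which the linear-in-$t$ growth is immediate.

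For that uniform bound I would write $\phi_s=e^{-isH}\psi$ and exploit the identity $\Xi=-2i\nabla$ (so that $\|\tfrac12\Xi\phi\|^2=\langle\phi,H_0\phi\rangle$), giving $\|\Xi(s)\psi\|^2=\|\Xi\phi_s\|^2=4\langle\phi_s,H_0\phi_s\rangle$. The form bound $|\langle\phi,V\phi\rangle|\leq a\langle\phi,H_0\phi\rangle+b\|\phi\|^2$ with $a<1$ rearranges to $\langle\phi,H_0\phi\rangle\leq\tfrac{1}{1-a}(\langle\phi,H\phi\rangle+b\|\phi\|^2)$. Applying this to $\phi=\phi_s$ and using that the unitary flow conserves both energy and norm, $\langle\phi_s,H\phi_s\rangle=\langle\psi,H\psi\rangle$ and $\|\phi_s\|=\|\psi\|$, shows $\|\Xi(s)\psi\|^2\leq\tfrac{4}{1-a}(\langle\psi,H\psi\rangle+b\|\psi\|^2)$, which is independent of $s$. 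A final application of the form bound in the form $\langle\psi,H\psi\rangle\leq(1+a)\langle\psi,H_0\psi\rangle+b\|\psi\|^2=(1+a)\|\tfrac12\Xi\psi\|^2+b\|\psi\|^2$ bounds the right-hand side by a constant times $\|\psi\|_{S_1}^2$. Substituting into the integral inequality gives $\|Q(t)\psi\|\leq\|Q\psi\|+C'|t|\,\|\psi\|_{S_1}\leq C(1+|t|)\|\psi\|_{S_1}$, with the analogous argument for $t<0$. I emphasize that because energy conservation makes the velocity bound \emph{uniform} in $s$, no Gronwall argument is needed; direct integration suffices.

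The hard part, and what makes this more than a formal manipulation, is justifying the domain issues and in particular establishing the asserted invariance $e^{-itH}D(Q)\subseteq D(Q)$, since both the integrand $\Xi(s)\psi$ and the very assertion $Q(t)\psi\in\calH$ presuppose $\phi_s\in D(Q)$, which is precisely part of the conclusion. I would handle this by regularization: replace each $q_j$ by a bounded smooth cutoff $q_j^{(R)}:=q_j(1+q_j^2/R^2)^{-1/2}$ (truncating $Q$ at scale $R$), so that $Q^{(R)}(t)$ is a genuinely bounded operator and both the commutator computation and the fundamental theorem of calculus are rigorous on a core of regular states. The one genuinely delicate point is controlling the regularized velocities $i[H,Q^{(R)}]$ uniformly in $R$: one must check that they remain dominated, with constants independent of $R$, by the (uniformly bounded) momentum, so that the estimate $\|Q^{(R)}(t)\psi\|\leq C(1+|t|)\|\psi\|_{S_1}$ holds uniformly in $R$. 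Granting this, letting $R\to\infty$ and invoking monotone convergence (Fatou) simultaneously yields $\phi_t\in D(Q)$ and the stated bound, after which a density argument extends it from the core to all of $D(Q)$ using that $Q$ is closed. In the discrete setting the momentum is bounded and $D(Q(t))=D(Q)$, so no regularization is needed and the bound follows directly from the integration above.
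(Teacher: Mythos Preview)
The paper does not provide its own proof of this theorem; it is simply stated and attributed to Radin--Simon \cite{RadinSimon}. Your proposal is essentially the original Radin--Simon argument: identify the Heisenberg velocity with the evolved momentum, bound that momentum uniformly in time via the form bound and energy conservation, and regularize $Q$ by bounded cutoffs so that the commutator computation and fundamental theorem of calculus are rigorous and yield the domain invariance together with the estimate. The outline is correct, and you have correctly isolated the one nontrivial point (the circularity in assuming $\phi_t\in D(Q)$) and its standard resolution.

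One small slip worth noting: you write that ``the integrand $\Xi(s)\psi$ \dots presuppose[s] $\phi_s\in D(Q)$''. In fact making sense of $\Xi(s)\psi$ only requires $\phi_s\in D(\Xi)=H^1(\mathbb R^d)$, which is the form domain of $H$ and is therefore preserved by the unitary group; your form-bound computation already establishes this. It is only the conclusion $\phi_t\in D(Q)$ that is not available a priori and that the regularization is needed for.
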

\begin{remark}
    We note that in the continuous case, we can write
    \begin{align*}
        \|\psi\|_{S_1}=\sqrt{\|\psi\|_{H_1}^2+\|Q\psi\|^2}.
    \end{align*}
    In the discrete setting, $\Xi$ is bounded so that 
    \begin{align*}
        \|\psi\|_{S_1} \leq \sqrt{3}\sqrt{\|\psi\|^2+\|Q\psi\|^2}.
    \end{align*}
\end{remark}
This result was later extended to higher moments, given that the potential is smooth enough; see \cite{boutetdemonvel2023ballistic}. Using the natural bound from Jensen's inequality, we get that 
\begin{align*}
    |Q|^p(t)\geq |Q|^{\tilde{p}}(t)
\end{align*}
if $p>\tilde{p}$. This yields the upper bound for all lower moments.\par  
Using this result, one can use the scattering method to show that most of the notions of ballistic transport are, in fact, stable under fast-decaying potentials. This is a direct application of the following lemma:
\begin{lemma}
    Let $H_0$ be a self adjoint operator acting on $\calH\in \{L^2(\bbR^d),\ell^2(\bbZ^d)\}$. Let $V$ be a bounded multiplication operator such that for some $\psi \in \calH$, we have
    \begin{align}\label{S1AssumDisc}
        &\int\limits_0^\infty (1+t)\|Ve^{-itH_0}\psi\|_{S_1} \, dt <\infty.
    \end{align}

    Then we have that for $H=H_0+V$, 
    \begin{align*}
        \lim\limits_{t\rightarrow \infty}\|(Qe^{-itH}\Omega-Qe^{-itH_0})\psi \|=0,
    \end{align*}
    where $\Omega=\slim\limits_{t\rightarrow\infty } e^{itH}e^{-itH_0}$ is the wave operator (for further discussion about the wave operator, see \ref{scattering}).
\end{lemma}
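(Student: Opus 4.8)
The plan is to use Cook's method to obtain an explicit Bochner-integral representation of the wave operator applied to $\psi$, then exploit the intertwining relation together with the Radin--Simon bound (Theorem~\ref{Radin-Simon}) to control the position operator. First I would observe that the hypothesis \eqref{S1AssumDisc} implies $\int_0^\infty \|Ve^{-itH_0}\psi\|\,dt<\infty$, since $\|\cdot\|\leq\|\cdot\|_{S_1}$ and $(1+t)\geq 1$; because $e^{isH}$ is unitary, the integrand $e^{isH}Ve^{-isH_0}\psi$ is then absolutely Bochner integrable. Differentiating $s\mapsto e^{isH}e^{-isH_0}\psi$ gives $ie^{isH}Ve^{-isH_0}\psi$, so Cook's argument yields the convergent representation $\Omega\psi=\psi+i\int_0^\infty e^{isH}Ve^{-isH_0}\psi\,ds$, and the same formula holds with $\psi$ replaced by any free-translate $e^{-itH_0}\psi$ (whose integrability condition is just the tail of the convergent integral).

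Next I would reduce the claim to an estimate on a single time-shifted integral. Writing $F(t):=(Qe^{-itH}\Omega-Qe^{-itH_0})\psi$ and using the intertwining property $e^{-itH}\Omega=\Omega e^{-itH_0}$ (verified by relabeling the limit defining $\Omega$), we get $F(t)=Q(\Omega-I)e^{-itH_0}\psi$. Applying the integral representation to $\phi=e^{-itH_0}\psi$ and substituting $u=s+t$ produces
\begin{align*}
F(t)=iQ\int_t^\infty e^{i(u-t)H}Ve^{-iuH_0}\psi\,du.
\end{align*}

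Then I would move $Q$ inside the integral and estimate term by term. Radin--Simon guarantees that $e^{i(u-t)H}$ preserves $D(Q)$, and since $\|Ve^{-iuH_0}\psi\|_{S_1}<\infty$ for a.e.\ $u$, each integrand lies in $D(Q)$ with
\begin{align*}
\|Qe^{i(u-t)H}Ve^{-iuH_0}\psi\|\leq C(1+|u-t|)\|Ve^{-iuH_0}\psi\|_{S_1}\leq C(1+u)\|Ve^{-iuH_0}\psi\|_{S_1},
\end{align*}
using $u\geq t$ so that $|u-t|\leq u$. This bound is integrable over $[t,\infty)$ by \eqref{S1AssumDisc}, so the $D(Q)$-valued Bochner integral converges and, by closedness of $Q$, commutes with the integral. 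Hence $\|F(t)\|\leq C\int_t^\infty(1+u)\|Ve^{-iuH_0}\psi\|_{S_1}\,du$, which tends to $0$ as $t\to\infty$ since it is the tail of a convergent integral.

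The main obstacle I anticipate is the rigorous justification of interchanging $Q$ with the Bochner integral: this requires both the domain-preservation statement from Radin--Simon and the per-summand bound above to certify that $u\mapsto Qe^{i(u-t)H}Ve^{-iuH_0}\psi$ is genuinely Bochner integrable, after which closedness of $Q$ closes the argument. The other delicate point is that Radin--Simon must be applied with the \emph{elapsed} time $u-t$ rather than $u$, and it is precisely the factor $(1+t)$ in hypothesis \eqref{S1AssumDisc} that absorbs the linear growth in Theorem~\ref{Radin-Simon} and leaves a convergent tail.
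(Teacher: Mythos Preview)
Your proof is correct and follows the standard route: Cook/Duhamel to write $(\Omega-I)e^{-itH_0}\psi$ as a Bochner integral, the intertwining relation $e^{-itH}\Omega=\Omega e^{-itH_0}$ to reduce to $Q(\Omega-I)e^{-itH_0}\psi$, and then Radin--Simon (Theorem~\ref{Radin-Simon}) applied with elapsed time $u-t$ to push $Q$ through the integral and obtain a convergent tail. The paper does not supply its own argument but defers to \cite{black2023directional}, Proposition~B.3, where exactly this Cook-plus-Radin--Simon strategy is carried out; your write-up matches that approach.
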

The proof of this theorem is the first part of the proof of Proposition B.3 in \cite{black2023directional}.\par
\begin{remark}
    Suppose that the potential $V$ is such that
    \begin{align*}
        &\|(1+R)Q\chi_{|x|>R}V\|_{\infty }\in L^1([0,\infty),dr).
    \end{align*}
    Then the above lemma applies, and in particular a fast decaying potential $V, \Xi V\sim x^{-2-\epsilon}$ is covered by this lemma.
\end{remark}
As an immediate corollary, we have
\begin{corollary}
    For $H_0, V, H$ as above, if $\psi $ exhibits ballistic transport for the second moment under any definition other than strong or strong resolvent, then $\Omega\psi $ exhibits the same notion.
\end{corollary}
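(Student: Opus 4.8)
The plan is to transfer the ballistic transport statement from the free evolution $e^{-itH_0}\psi$ to the full evolution $e^{-itH}\Omega\psi$ by exploiting the lemma immediately preceding the corollary. The key observation is that the lemma guarantees $\|(Qe^{-itH}\Omega - Qe^{-itH_0})\psi\| \to 0$, so the two position-space profiles $Q(t)\Omega\psi$ (under $H$) and $Qe^{-itH_0}\psi$ (under $H_0$) differ by a quantity that vanishes as $t\to\infty$. Written in Heisenberg form, this says
\begin{align*}
  \big\|\,|Q|(t)_H\,\Omega\psi - |Q|(t)_{H_0}\,\psi\,\big\| \longrightarrow 0,
\end{align*}
where I use subscripts to indicate which Hamiltonian generates the evolution. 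Since all the relevant notions (norm-growth, exponent, Abel-averaged) are controlled by the scalar quantity $\||Q|(t)\psi\|$, the entire argument reduces to showing that a vanishing difference in $L^2$-norm cannot destroy a quadratic growth rate.

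First I would record the elementary norm inequality $\big|\,\|A(t)\| - \|B(t)\|\,\big| \le \|A(t) - B(t)\|$, applied with $A(t) = |Q|(t)_H\Omega\psi$ and $B(t) = |Q|(t)_{H_0}\psi$. Writing $\varepsilon(t) := \|A(t)-B(t)\|$, which tends to $0$, this gives
\begin{align*}
  \|B(t)\| - \varepsilon(t) \le \|A(t)\| \le \|B(t)\| + \varepsilon(t).
\end{align*}
For the norm-growth sense, by hypothesis we have $c\,t \le \|B(t)\| \le C\,t$ for large $t$; since $\varepsilon(t)$ is bounded (indeed $o(1)$), the additive perturbation by $\varepsilon(t)$ is absorbed into the constants by shrinking $c$ and enlarging $C$ slightly (for $t$ large enough, $\varepsilon(t) \le \tfrac{c}{2} t$ trivially), and norm-growth ballistic transport for $\Omega\psi$ follows. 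For the exponent sense, I would take logarithms: $\log\|A(t)\|^2 = \log\|B(t)\|^2 + \log\big(1 + O(\varepsilon(t)/\|B(t)\|)\big)$, and since $\|B(t)\|\sim t$ forces $\varepsilon(t)/\|B(t)\| \to 0$, the correction term is $o(1)$ and hence negligible after dividing by $2\log t$; thus $\beta^\pm_{\Omega\psi}(2) = \beta^\pm_\psi(2) = 1$.

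For the Abel-averaged notions, I would integrate the two-sided pointwise bound against the weight $\tfrac{2}{T}e^{-2t/T}\,dt$. The cross terms and the $\varepsilon(t)^2$ term are controlled using $\braket{\braket{Q^2}}_A(T) = \tfrac{2}{T}\int_0^\infty e^{-2t/T}\|A(t)\|^2\,dt$ together with Cauchy–Schwarz on the product $\|B(t)\|\,\varepsilon(t)$; since $\int_0^\infty \tfrac{2}{T}e^{-2t/T}\varepsilon(t)^2\,dt \to 0$ by dominated convergence (as $\varepsilon$ is bounded and $\varepsilon(t)\to 0$) and the leading $\|B(t)\|^2$ integral is comparable to $T^2$ by hypothesis, the averaged quantities for $A$ and $B$ are comparable up to lower-order terms, preserving both the averaged norm-growth bound $cT^2 \le \braket{\braket{Q^2_{\Omega\psi}}}_A(T) \le CT^2$ and the averaged exponent $\tilde\beta^\pm_{\Omega\psi}(2)=1$.

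The main obstacle I anticipate is not conceptual but bookkeeping: one must confirm that the exclusion of the strong and strong-resolvent senses in the statement is genuinely necessary, and indeed it is, because $\varepsilon(t)\to 0$ controls only the \emph{norm} $\|(Q(t)/t)\Omega\psi\|$ rather than the vector $(Q(t)/t)\Omega\psi$ itself, so one cannot conclude convergence of $\tfrac{1}{t}Q(t)\Omega\psi$ to a specific limit $P_{\Omega\psi}$ from this estimate alone — the lemma compares the full and free evolutions of the \emph{same} applied position operator, but the limiting vector for the free dynamics need not be directly identifiable as the asymptotic velocity of $\Omega\psi$ without additional scattering input. A secondary point requiring care is ensuring $\Omega\psi$ lies in the appropriate domain $\bigcap_{t\ge 0} D(Q(t))$ so that all quantities are finite; this follows from the Radin–Simon theorem (Theorem~\ref{Radin-Simon}) applied to $H$, guaranteeing $D(Q)$ is preserved under $e^{-itH}$, combined with the hypothesis \eqref{S1AssumDisc} ensuring the wave operator $\Omega$ maps $\psi$ into a suitable subspace. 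Once domains are settled, the three families of estimates above are routine.
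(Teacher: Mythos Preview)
Your approach matches the paper's exactly: the paper's proof is a single line, observing that all the listed notions depend only on the scalar $\|Qe^{-itH}\Omega\psi\|$, which by the preceding lemma is asymptotically equal to $\|Qe^{-itH_0}\psi\|$. Your detailed treatment of the norm-growth, exponent, and Abel-averaged cases is correct in substance and simply spells out what the paper calls ``immediate.''

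There is, however, one slip worth correcting. The displayed claim
\[
\big\|\,|Q|(t)_H\,\Omega\psi - |Q|(t)_{H_0}\,\psi\,\big\| \longrightarrow 0
\]
does \emph{not} follow from the lemma: the two Heisenberg evolutions conjugate by \emph{different} unitaries $e^{itH}$ and $e^{itH_0}$, so closeness of $|Q|e^{-itH}\Omega\psi$ and $|Q|e^{-itH_0}\psi$ in $\calH$ does not yield closeness of $e^{itH}|Q|e^{-itH}\Omega\psi$ and $e^{itH_0}|Q|e^{-itH_0}\psi$. Since you then set $\varepsilon(t)=\|A(t)-B(t)\|$ and use $\varepsilon(t)\to 0$, this is a genuine gap as written. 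The fix is trivial: work in the Schr\"odinger picture, taking $A(t)=|Q|e^{-itH}\Omega\psi$ and $B(t)=|Q|e^{-itH_0}\psi$. Then $\|A(t)\|$ and $\|B(t)\|$ are unchanged (unitaries preserve norms), while $\|A(t)-B(t)\|=\|(Qe^{-itH}\Omega-Qe^{-itH_0})\psi\|\to 0$ directly by the lemma, and the remainder of your argument goes through verbatim.
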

\begin{proof}
    This is immediate, as all the notions rely on $\|Qe^{-itH}\Omega\psi\|$, and by the above, in the limit, it is asymptotically equal to $\|Qe^{-itH_0}\psi\|$.
\end{proof}

\begin{remark}
    We note that this is not trivially enough for strong ballistic transport, as one gets only that 
    \begin{align*}
        \lim\limits_{t\rightarrow \infty}\|(e^{itH}Qe^{-itH}\Omega-e^{itH}Qe^{-itH_0})\psi \|=0.
    \end{align*}
    In order to conclude strong ballistic transport, one needs to show that 
    \begin{align*}
        \lim\limits_{t\rightarrow \infty}\|e^{itH}Qe^{-itH_0}\Omega \psi-e^{itH_0}Qe^{-itH_0}\psi \|=0,
    \end{align*}
    which requires further assumptions.
\end{remark}

As explained above, one would expect that the free Hamiltonian $H_0=-\Delta $ defined on $L^2(\bbR^d)$ or $\ell^2(\bbZ^d)$ will have all types of ballistic transport, and indeed this is what we have:
\begin{theorem}
    For  $\psi \in Dom(Q)\cap Dom(\Xi)$ we have that for $H=H_0$,
    \begin{align*}
        \frac{Q(t)}{t}\psi \rightarrow \Xi\psi. 
    \end{align*}
     In particular, 
    \begin{align*}
    \frac{Q(t)}{t} \rightarrow \Xi 
    \end{align*}
    in the strong resolvent sense.
\end{theorem}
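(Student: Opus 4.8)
The plan is to compute the Heisenberg-evolved position operator explicitly for the free Hamiltonian and identify its leading behavior. The key observation is the Heisenberg equation of motion: for the free evolution, the time derivative of $Q(t)$ is governed by the commutator with $H_0$. Concretely, I would first establish, at least formally and then on the appropriate domain, the identity
\begin{align*}
    \frac{d}{dt}Q(t) = e^{itH_0}\, i[H_0,Q]\, e^{-itH_0} = e^{itH_0}\,\Xi\, e^{-itH_0} = \Xi(t),
\end{align*}
using the definition $\Xi = -i[Q,H_0]$. Since $\Xi$ commutes with $H_0$ (both are functions of momentum — on $\bbR^d$, $\Xi = -2i\nabla$ and $H_0 = -\Delta$; on $\bbZ^d$ the analogous discrete symbols), we have $\Xi(t) = e^{itH_0}\Xi e^{-itH_0} = \Xi$ for all $t$. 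Integrating the equation of motion then yields $Q(t)\psi = Q(0)\psi + t\,\Xi\psi$ on a suitable dense domain, from which
\begin{align*}
    \frac{Q(t)}{t}\psi = \frac{Q\psi}{t} + \Xi\psi \longrightarrow \Xi\psi
\end{align*}
as $t\to\infty$, for $\psi \in D(Q)\cap D(\Xi)$.

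Next I would make the domain bookkeeping rigorous rather than purely formal. The cleanest route is to work on the Fourier side, where both $H_0$ and $\Xi$ act as multiplication operators by their symbols (say $|\xi|^2$ and $2\xi$ on $\bbR^d$), and $Q$ becomes $i\nabla_\xi$. There the propagator $e^{-itH_0}$ acts by multiplication by $e^{-it|\xi|^2}$, and one computes directly
\begin{align*}
    e^{itH_0} (i\nabla_\xi) e^{-itH_0} = i\nabla_\xi + t\cdot 2\xi,
\end{align*}
which is exactly $Q + t\Xi$ transported back. This makes the identity $Q(t) = Q + t\Xi$ manifest and shows $e^{-itH_0}$ preserves $D(Q)\cap D(\Xi)$, so all quantities are well-defined. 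I would remark that this domain preservation also follows from the Radin--Simon theorem (Theorem~\ref{Radin-Simon}) stated above, applied with $V=0$.

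Finally, to upgrade the pointwise strong convergence to strong resolvent convergence of $\frac{Q(t)}{t}$ to $\Xi$, I would invoke the general principle that strong convergence on a common core implies strong resolvent convergence; this is precisely Theorem~VIII.25 of \cite{RSVol1} quoted earlier. The set $D(Q)\cap D(\Xi)$ (equivalently the Schwartz-type core $D(|Q|)\cap D(\Xi)$, or Fourier transforms of such) is a common core for all the operators $\frac{Q(t)}{t}$, each of which is self-adjoint, and $\Xi$ is self-adjoint, so the hypotheses are met. The main obstacle is not the algebra — which is a one-line computation on the Fourier side — but the careful verification that $D(Q)\cap D(\Xi)$ is genuinely a common core and that the commutator manipulations are justified on it rather than merely formal; this is where I would spend the most care, handling the discrete and continuous cases in parallel since in the discrete setting $\Xi$ is bounded and the domain issues simplify considerably.
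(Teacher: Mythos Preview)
Your proposal is correct and follows essentially the same approach as the paper: the paper also computes $Q(t)=Q+t\Xi$ (first via the Fourier transform in both the continuum and discrete settings, then alternatively via the commutator identity $[Q,e^{-itH_0}]=t\Xi e^{-itH_0}$), concludes $\frac{Q(t)}{t}\psi\to\Xi\psi$ on $D(Q)\cap D(\Xi)$, and invokes the common core criterion for strong resolvent convergence. Your Heisenberg-equation formulation is a minor repackaging of the same computation.
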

\begin{proof}
    
    We offer two proofs; both proceed by finding an asymptotic velocity on a common core of $Q(t)$ and $\Xi$. The second proof avoids treating the continuum and discrete setting separately, although the idea is identical in both.

    In the continuum, taking Fourier transform, denote by $\calF$, and denoting the momentum variable by $\xi$, we have 
    \begin{align*}
        \mathcal{F}\left(\frac{1}{t}e^{itH_0}Qe^{-itH_0}\psi\right)&=\frac1te^{it\|\xi\|^2}(2te^{-it\xi^2}\xi\hat{\psi}-e^{-it\|\xi\|^2}i\nabla_{\xi} \hat{\psi})\\
        &=2\xi\hat{\psi}+O_{L^2(\bbR^d)}(1/t).
    \end{align*}
    Inverting the Fourier transform, we see
    \begin{align*}
    \lim_{t\to\infty}\frac{1}{t}e^{itH_0}Qe^{-itH_0}\psi=-2i \nabla \psi,
    \end{align*}
    where $\nabla \psi\ne 0$ for nonzero $\psi\in \Dom(\Xi)=H^1(\mathbb R^d)$. As $\mathcal S(\mathbb R^d)\subset D(Q)\cap H^1(\mathbb R^d)$, we have shown convergence on a common core and may conclude strong resolvent convergence. 
    \par
    In the discrete setting, we take the discrete Fourier transform, also denoted $\calF$, to find 
     \begin{align*}
        \mathcal{F}\left(\frac{1}{t}e^{itH_0}Qe^{-itH_0}\psi\right)&=\frac1te^{it\sum\limits_{\ell=1}^d2\cos(\xi_\ell)}\left(-2\sin(\xi_j)te^{-it\sum\limits_{\ell=1}^d2\cos(\xi_\ell)}\hat{\psi}-e^{-it\sum\limits_{\ell=1}^d2\cos(\xi_\ell)}i\partial_{\xi_j} \hat{\psi}\right)\\
        &=-2\sin(\xi_j)\hat{\psi}+O_{\ell^2(\bbZ^d)}(1/t),
    \end{align*}
    and we may conclude as before.\par 
    
    Alternatively, we may start by noting that 
    \begin{align*}
        &[Q,H_0]=i\Xi,\\
        &[\Xi,H_0]=0.
    \end{align*}
    So we have that  (by a simple commutator identity)
    \begin{align*}
        -i[Q,H_0^m]=\sum_{n=0}^{m-1}H_0^n(-i[Q,H_0])H_0^{m-1-n}=\sum_{n=0}^{m-1}H_0^n\Xi H_0^{m-1-n}=mH_0^{m-1}\Xi. 
    \end{align*}
    We have
    \begin{align*}
        [Q,e^{-itH_0}]&=[Q,\sum_{n=0}^\infty\frac{(-itH_0)^n}{n!} ]=\sum_{n=0}^\infty\frac{(-it)^n}{n!}[Q,H_0^n]=\sum_{n=0}^\infty\frac{(-it)^n}{n!}nH_0^{n-1}i\Xi\\
        &=t\Xi\sum_{n=1}^\infty\frac{(-itH_0)^{n-1}}{(n-1)!}=t\Xi e^{-itH_0}.
    \end{align*}
    So we see
    \begin{align*}
        Q(t)=e^{itH_0}Qe^{-itH_0}=e^{itH_0}(e^{-itH_0}Q+e^{-itH_0}t\Xi)=Q+t\Xi.
    \end{align*}
    In particular, this implies that $D(Q)\cap D(\Xi)=D(Q(t))$. 
    Thus,
    \begin{align*}
        \frac{Q(t)}{t}\psi=\frac{Q}{t}\psi+\Xi\psi\rightarrow \Xi\psi
    \end{align*}
     for all $\psi \in D(Q)\cap D(\Xi)$. Since we have convergence of $Q(t)/t$ on its domain, we may conclude convergence in the strong resolvent sense. 
\end{proof}

\section{Past Results}\label{past}

Over the years, there have been many important results proving different notions of ballistic transport in different types of systems. \par
In this section, at first, we will survey these results that hold in great generality and provide the common base that other results rely on. Then we will review some results related to other types of transport. Finally, we will look more specifically at periodic and almost-periodic potentials, for which many results have been proven.  \par
As mentioned above, in this survey, we will focus on the results concerning the second moment, as this is the one most commonly used. So, unless stated otherwise, the result holds for the second moment.

\subsection{General Results}\label{general}

The most general results on transport typically rely on spectral decompositions of the Hamiltonian generating the time dynamics. In some sense, they may be viewed as quantitative descendants of the RAGE theorem \cite{AmreinGeorgescu, Ruelle} discussed in more detail in Section \ref{Rage}. One of the earlier results in this field was Simon's result about the absence of ballistic transport for operators with pure point spectrum. This result was the basis for the connection between spectral types and ballistic transport:
\begin{theorem}[Theorems 1.2 and 3.1 in \cite{simon1990absence}]
    Let $V$ be a multiplication operator such that $H$ is self-adjoint on $\calH$, and assume that $H$ has only pure point spectrum. Then for all $\varphi \in D(Q)\cap D(H)$, we have 
    \begin{align*}
        \lim\limits_{t\rightarrow \infty }\left\|\frac{Q}{t}e^{-itH}\varphi\right\|=0.
    \end{align*}
\end{theorem}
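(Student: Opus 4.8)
The plan is to exploit the defining feature of pure point spectrum---that each eigenstate is stationary up to a phase, and hence does not spread---and then to promote this from single eigenstates to a general $\varphi$ via an approximation argument whose error is controlled by the Radin--Simon ballistic upper bound. The conceptual heart is that the propagator acts on an eigenstate merely by multiplication by a phase, so it cannot move mass outward.

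First I would fix an orthonormal basis of eigenvectors $\{\phi_n\}$ with $H\phi_n=E_n\phi_n$ and expand $\varphi=\sum_n c_n\phi_n$. For a finite truncation $\varphi_N=\sum_{n\le N}c_n\phi_n$ built from eigenvectors lying in $D(Q)$, the evolution only rotates the components, $e^{-itH}\varphi_N=\sum_{n\le N}c_n e^{-itE_n}\phi_n$, so that
\[
\|Q e^{-itH}\varphi_N\|\le \sum_{n\le N}|c_n|\,\|Q\phi_n\|=:M_N<\infty
\]
uniformly in $t$. Dividing by $t$ gives $\tfrac1t\|Q e^{-itH}\varphi_N\|\le M_N/t\to 0$. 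To handle the remainder $r_N=\varphi-\varphi_N$, I would use that $\varphi\in D(Q)\cap D(H)$ forces $\varphi\in S_1$: in the continuum $D(H)\subseteq D(H_0)=H^2(\bbR^d)\subseteq H^1(\bbR^d)=D(\Xi)$, while in the discrete setting $\Xi$ is bounded so that $S_1=D(Q)$ directly. Thus Theorem~\ref{Radin-Simon} applies and yields $\|Q e^{-itH}r_N\|\le C(1+|t|)\|r_N\|_{S_1}$, whence $\tfrac1t\|Q e^{-itH}r_N\|\le C\tfrac{1+t}{t}\|r_N\|_{S_1}$. Combining the two estimates by the triangle inequality and taking $\limsup_{t\to\infty}$ gives
\[
\limsup_{t\to\infty}\tfrac1t\|Q e^{-itH}\varphi\|\le C\|r_N\|_{S_1},
\]
and letting $N\to\infty$ finishes the proof, \emph{provided} the truncations can be chosen so that $\|r_N\|_{S_1}\to 0$.

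The hard part is precisely this last proviso, where the two requirements on the approximants are in tension: the stationary-phase step needs the $\varphi_N$ to be finite combinations of eigenfunctions that individually lie in $D(Q)$, whereas the tail estimate needs $\varphi_N\to\varphi$ in the strong $S_1$-norm, i.e.\ $\|Q r_N\|\to 0$ and $\|\Xi r_N\|\to 0$. Since the spectral projections of $H$ need not commute with $Q$ (or with $\Xi$ when $V\neq 0$), convergence of the eigenfunction truncations in the $S_1$-norm is not automatic, and individual eigenfunctions need not even lie in $D(Q)$ without further input. This is where the almost-periodic (RAGE-type) structure of the precompact orbit $\{e^{-itH}\varphi\}$, together with the regularity encoded in $\varphi\in D(H)$, must be brought to bear to control $[Q,P_N]\varphi$ and $[\Xi,P_N]\varphi$. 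I expect establishing this $S_1$-convergence of suitable eigenfunction truncations to be the main technical obstacle; everything else reduces to the elementary fact that multiplication by a phase does not transport mass. (If $V$ is so singular that the relative form bound in Theorem~\ref{Radin-Simon} fails, one would instead replace the tail bound by the more general ballistic upper bound of \cite{boutetdemonvel2023ballistic} or a direct a priori estimate, but the structure of the argument is unchanged.)
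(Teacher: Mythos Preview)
The survey paper does not supply its own proof of this statement; it simply cites Simon's original paper and moves on. So there is no in-paper argument to compare against.

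That said, your outline is the correct architecture and matches Simon's actual proof: on finite eigenfunction sums the evolution is a phase and $\|Qe^{-itH}\varphi_N\|$ is bounded uniformly in $t$; the remainder is controlled by the Radin--Simon ballistic upper bound; and one concludes by taking $N\to\infty$. You have also put your finger on the one nontrivial step---obtaining $S_1$-convergence of the truncations---and correctly diagnosed the obstruction, namely that spectral projections of $H$ do not commute with $Q$, and that individual eigenfunctions need not lie in $D(Q)$. In Simon's paper this is resolved not by working with bare eigenfunctions but with spectral truncations of $\varphi$ itself: commutator identities of the form $[Q,(H-z)^{-1}]=(H-z)^{-1}[H,Q](H-z)^{-1}$, together with $\varphi\in D(H)\cap D(Q)$, are used to show that $\chi_{[-N,N]}(H)\varphi$ stays in $D(Q)$ and converges there. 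That is precisely the content behind the citation of his Theorem~3.1, and it closes the gap you flagged. Your parenthetical about more singular $V$ is also accurate: the scheme only needs \emph{some} linear-in-$t$ upper bound on $\|Qe^{-itH}\cdot\|$ controlled by a norm in which the truncations converge.
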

This result was recently generalized for operators with additional spectral types:
\begin{theorem}[Theorem A.1 in \cite{black2023directional}]\label{Absance}
    Let $H=-\Delta +V$ with $V$ relatively bounded. Let $\psi \in \calH_{\mathrm{pp}}\cap D(Q)\cap D(H)$. Then we have that
    \begin{align*}
        \lim_{t\rightarrow\infty} \left\|\frac{Q}{t}e^{-itH}\psi\right\|= 0.
    \end{align*}
\end{theorem}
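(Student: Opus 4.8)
The plan is to show that for a pure point state $\psi \in \calH_{\mathrm{pp}} \cap D(Q) \cap D(H)$, the quantity $\frac{1}{t}\|Qe^{-itH}\psi\|$ vanishes asymptotically. The natural strategy is to first establish the result on the dense (within $\calH_{\mathrm{pp}}$) set of finite linear combinations of eigenfunctions, and then upgrade to all of $\calH_{\mathrm{pp}} \cap D(Q) \cap D(H)$ by an approximation argument. I would begin by writing $\psi = \sum_j c_j \varphi_j$ where $H\varphi_j = E_j \varphi_j$, and I would handle first the single-eigenfunction case.

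For a single eigenfunction $\varphi$ with $H\varphi = E\varphi$, the propagator acts simply as $e^{-itH}\varphi = e^{-itE}\varphi$, a pure phase. Hence $Qe^{-itH}\varphi = e^{-itE} Q\varphi$, so $\|Qe^{-itH}\varphi\| = \|Q\varphi\|$ is constant in time (this uses $\varphi \in D(Q)$), and dividing by $t$ gives decay like $1/t \to 0$. For a finite combination $\psi = \sum_{j=1}^N c_j \varphi_j$, one has $e^{-itH}\psi = \sum_j c_j e^{-itE_j}\varphi_j$, and by the triangle inequality $\|Qe^{-itH}\psi\| \leq \sum_j |c_j|\,\|Q\varphi_j\|$, which is again bounded uniformly in $t$, so $\frac{1}{t}\|Qe^{-itH}\psi\| \to 0$. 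Thus the statement holds for the dense set $\mathcal{D}$ of finite combinations of eigenfunctions lying in $D(Q)$.

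The main obstacle, and the step requiring the most care, is the passage from the dense set to a general $\psi \in \calH_{\mathrm{pp}} \cap D(Q) \cap D(H)$. The difficulty is that $Q$ is unbounded, so approximating $\psi$ by $\psi_N \in \mathcal{D}$ merely in $\calH$-norm is insufficient; one needs control of $\|Q(\psi - \psi_N)\|$ as well, and one must ensure this approximation interacts well with the time evolution uniformly in $t$. This is precisely where the Radin--Simon type ballistic upper bound (Theorem \ref{Radin-Simon}) enters, or more directly the linear-in-$t$ bound $\|Qe^{-itH}\phi\| \leq C(1+|t|)\|\phi\|_{S_1}$. I would choose $\psi_N$ to converge to $\psi$ in the graph norm associated to $Q$ (and in a manner compatible with $S_1$, i.e. also controlling $\Xi$), write
\begin{align*}
    \frac{1}{t}\|Qe^{-itH}\psi\| \leq \frac{1}{t}\|Qe^{-itH}\psi_N\| + \frac{1}{t}\|Qe^{-itH}(\psi-\psi_N)\|,
\end{align*}
and bound the second term using the linear upper bound by $\frac{C(1+t)}{t}\|\psi-\psi_N\|_{S_1}$, which is $\lesssim \|\psi-\psi_N\|_{S_1}$ uniformly in $t \geq 1$. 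Choosing $N$ large makes this uniformly small, while for fixed $N$ the first term tends to $0$ as $t\to\infty$ by the dense-set case. A standard $\varepsilon/2$ argument then closes the proof.

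The subtle point I would flag is verifying that $\mathcal{D}$ can be chosen dense in the appropriate $S_1$-norm topology within $\calH_{\mathrm{pp}} \cap D(Q)$, since eigenfunctions need not automatically lie in $D(Q) \cap D(\Xi)$ and the projections onto eigenspaces need not preserve $D(Q)$; the hypothesis $\psi \in D(Q) \cap D(H)$ and the relative boundedness of $V$ should be leveraged here, possibly by a truncation in the position variable combined with a spectral truncation, to produce approximants controlled simultaneously in the $\calH$, $Q$, and $\Xi$ norms. This approximation construction is where I expect the real work to reside; the dynamical estimates themselves are straightforward once the approximants are in hand.
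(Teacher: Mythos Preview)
The paper does not contain a proof of this statement; it is quoted as Theorem~A.1 of \cite{black2023directional} and used as a black box in the survey, so there is no in-paper argument to compare against.

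On the merits of your plan: the overall architecture---handle finite combinations of eigenfunctions trivially, then pass to general $\psi\in\calH_{\mathrm{pp}}\cap D(Q)\cap D(H)$ via Radin--Simon---is exactly the standard route going back to Simon's original argument in \cite{simon1990absence}. You have also put your finger on the only genuine difficulty, namely that eigenfunctions of $H$ need not lie in $D(Q)$ (the paper itself notes Wigner--von~Neumann examples with $\varphi(x)\sim |x|^{-3/4}$), and that spectral projections of $H$ need not preserve $D(Q)$ or the $S_1$-norm. Your suggested fix, combining a spatial truncation with a spectral one, is indeed how this is handled: roughly, one approximates $\psi$ by $\chi_{|x|\le R}\,P_{[-N,N]}(H)\psi$ and uses that $P_{[-N,N]}(H)$ commutes with the flow, that $\chi_{|x|\le R}$ forces membership in $D(Q)$, and that $\psi\in D(H)$ together with relative boundedness controls $\|\Xi(\psi-\psi_{R,N})\|$. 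So the plan is sound, but be aware that the approximation step you flag as ``where the real work resides'' is not a formality---it is the entire content of the theorem beyond Simon's original result, and writing it out cleanly requires some care with the order in which $R$ and $N$ are sent to infinity.
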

In Appendix 2 of \cite{del1996operators}, the authors have shown that this result is sharp in the following sense:
\begin{theorem}
    There is a bounded potential function $V$ defined on $\bbZ_+$, such that the Hamiltonian
    \begin{align*}
        H=-\Delta+V
    \end{align*}
    acting on $\ell^2(\bbZ_+)$ with Dirichlet boundary conditions has a complete set of normalized eigenfunctions, and we have that 
    \begin{align*}
        \limsup\limits_{t\rightarrow\infty } \left\|\frac{Q(t)\sqrt{\ln(t)}}{t}\delta_0 \right\| = \infty. 
    \end{align*}
\end{theorem}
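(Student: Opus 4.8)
The plan is to split the statement into a soft reduction and a hard construction, since the real content is the coexistence of pure point spectrum with near-ballistic transport. First, since $e^{itH}$ is unitary, $\|Q(t)\delta_0\|=\|Qe^{-itH}\delta_0\|$, so writing $f(t):=\|Qe^{-itH}\delta_0\|^2$ the assertion is exactly $\limsup_{t\to\infty}\tfrac{\ln t}{t^2}f(t)=\infty$. Two inputs drive the argument. On one hand, because $V$ is bounded it is $H_0$-form bounded with relative bound $0<1$, and $\delta_0\in S_1$ (indeed $Q\delta_0=0$ and $\Xi\delta_0$ is finitely supported), so the Radin--Simon bound (Theorem \ref{Radin-Simon}) gives the ballistic upper bound $f(t)\le C(1+t)^2$. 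On the other hand, I would aim to produce $V$ for which the \emph{time-averaged} second moment is genuinely ballistic along a sequence, i.e.
\begin{align*}
\limsup_{T\to\infty}\frac{1}{T^3}\int_0^T f(t)\,dt>0.
\end{align*}

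Granting these two inputs, the $\sqrt{\ln t}$ in the statement is exactly the loss produced by a one-line integral comparison. Suppose toward contradiction that $\limsup_t \tfrac{\ln t}{t^2}f(t)=M<\infty$, so that $f(t)\le (M+1)\,t^2/\ln t$ for all large $t$. Splitting $\int^T t^2/\ln t\,dt$ at $\sqrt T$ and using $\ln t\ge \tfrac12\ln T$ on $[\sqrt T,T]$ gives $\int_0^T f\,dt = O(T^3/\ln T)=o(T^3)$, contradicting the ballistic average. Hence the entire theorem reduces to exhibiting a bounded $V$ on $\bbZ_+$ whose Dirichlet operator has a complete set of eigenfunctions (pure point spectrum), yet carries ballistic Ces\`aro-averaged transport for $\delta_0$. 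This is the sharp counterpart to Theorem \ref{Absance}: pure point spectrum forces $\tfrac{Q(t)}{t}\delta_0\to 0$, but no power of $\ln t$ correction survives.

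For the construction I would use a rank-one perturbation family $H_\lambda=H_0+\lambda\langle\delta_0,\cdot\rangle\delta_0$ (equivalently, varying the boundary behavior), built from a bulk potential chosen so that: (i) by the Simon--Wolff criterion, $H_\lambda$ has pure point spectrum for Lebesgue-a.e.\ $\lambda$; and (ii) the $\lambda$-averaged spectral data at $\delta_0$ equals Lebesgue measure (spectral averaging). The point of (ii) is that the $\lambda$-average of the dynamical matrix elements $|\langle\delta_m,e^{-itH_\lambda}\delta_0\rangle|^2$ behaves like the free, one-dimensional, hence ballistic, evolution, so that $\int d\lambda\,\tfrac{1}{T}\int_0^T f_\lambda(t)\,dt\gtrsim T^2$. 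A positivity/Fatou argument then yields a positive-measure set of $\lambda$ with ballistic Ces\`aro average; intersecting with the a.e.-pure-point set of (i) and fixing one such $\lambda$ produces the desired bounded $V$.

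The main obstacle is step (i)--(ii): assembling a single bounded base potential with all the required properties at once. The tension is structural. Pure point spectrum means the spectral measure of $H_\lambda$ is atomic, hence zero-dimensional, so single-operator continuity lower bounds (of Guarneri/Combes--Last type) give nothing; the transport must instead arise from the anomalous spatial spread of the eigenfunctions, which are barely $\ell^2$ with heavy power-law tails, together with the fine distribution of eigenvalues. Spectral averaging is what bypasses this, because the averaged measure is one-dimensional and so drives ballistic averaged transport even though each individual $\lambda$ is localized. However, by Kato--Rosenblum the a.e.-pure-point requirement forces $H_0$ to have \emph{no} absolutely continuous spectrum, so one must reconcile ``no a.c.\ spectrum'' with ``averaged measure equal to Lebesgue'' and with the $G$-function finiteness needed for Simon--Wolff. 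Achieving this balance in a bounded potential is the delicate heart of the argument, and it is precisely this balance that pins the threshold at $\sqrt{\ln t}$ rather than any larger power.
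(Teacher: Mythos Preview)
The paper does not supply its own proof of this theorem; it simply attributes the result to Appendix~2 of \cite{del1996operators}. So there is no in-paper argument to compare against, only the cited construction of del Rio--Jitomirskaya--Last--Simon.

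More importantly, your reduction contains a genuine gap that cannot be repaired. You propose to find a bounded pure point $V$ for which
\[
\limsup_{T\to\infty}\frac{1}{T^3}\int_0^T f(t)\,dt>0,\qquad f(t)=\|Qe^{-itH}\delta_0\|^2,
\]
and then deduce the $\sqrt{\ln t}$ statement by calculus. But this hypothesis is \emph{incompatible} with pure point spectrum. Indeed, $\delta_0\in D(Q)\cap D(H)$ (the operator is bounded and $Q\delta_0=0$), so Simon's theorem---the very result this example is meant to sharpen---gives $f(t)=o(t^2)$. A one-line Ces\`aro argument then forces $\frac{1}{T^3}\int_0^T f(t)\,dt\to 0$. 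Hence the intermediate target you set yourself can never be attained by any operator with a complete set of eigenfunctions, and the reduction is vacuous.

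The error propagates to your construction step. Spectral averaging may well give $\int d\lambda\,\tfrac{1}{T}\int_0^T f_\lambda(t)\,dt\gtrsim T^2$, but your ``positivity/Fatou argument'' cannot extract a positive-measure set of $\lambda$ with $\limsup_T T^{-3}\int_0^T f_\lambda>0$: for every $\lambda$ in the a.e.-pure-point set that quantity is zero, as just shown, so the intersection you seek is empty. This is not a technicality---it is exactly the mechanism that forces the logarithmic loss. The $\sqrt{\ln t}$ does not arise from an integral comparison at the end; it must already be present in the lower bound one proves for individual $\lambda$ (roughly $\tfrac{1}{T}\int_0^T f_\lambda\gtrsim T^2/\ln T$ along a subsequence), and obtaining that requires quantitative control on how the mass of $\mu_\lambda$ concentrates on atoms, not merely that the $\lambda$-averaged measure is Lebesgue. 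Your sketch identifies the right circle of ideas (rank-one perturbations, Simon--Wolff, spectral averaging) but places the logarithm at the wrong stage of the argument.
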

In particular, this theorem says that $\delta_0$ spreads at a rate $\sim \frac{t}{\sqrt{\ln(t)}}$. This result is extended to operators on $\ell^2(\mathbb Z)$ in \cite{DFESO2}, in the following strengthened form:
\begin{theorem}
Let $f:\mathbb R_+\to \mathbb R^+$ be monotone with $\lim_{t\to\infty}f(t)=\infty$. Then, there is a bounded potential function $V:\bbZ\to \mathbb R$, such that the Hamiltonian
    \begin{align*}
        H=-\Delta+V
    \end{align*}
    has a complete set of normalized exponentially decaying eigenfunctions and 
    \begin{align*}
        \limsup\limits_{t\rightarrow\infty } \left\| \frac{Q(t)f(t)}{t}\delta_0 \right\| = \infty.
    \end{align*}\end{theorem}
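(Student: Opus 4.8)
The plan is to construct $V$ so that $H=-\Delta+V$ is fully localized (pure point spectrum with a complete orthonormal set of exponentially decaying eigenfunctions) yet leaks mass from $\delta_0$ out to a distance comparable to $t$ along a sparse sequence of times, thereby saturating the gap between the ballistic upper bound $\|Q e^{-itH}\delta_0\|=O(t)$ (Radin--Simon) and the vanishing rate $\|Q e^{-itH}\delta_0\|=o(t)$ forced by Theorem~\ref{Absance}. The engine is the elementary two-level (double-well) tunneling mechanism: if $H$ has a near-degenerate pair of normalized eigenfunctions $\phi_n^+,\phi_n^-$ with eigenvalues $E_n^\pm=E_n\pm\delta_n/2$, where $\phi_n^++\phi_n^-$ is concentrated near the origin and $\phi_n^+-\phi_n^-$ is concentrated near a far site $L_n$, then the portion of $\delta_0$ lying in $\mathrm{span}\{\phi_n^+,\phi_n^-\}$ is carried out to distance $L_n$ at the resonance time $t_n\sim\pi/\delta_n$, contributing $\sim a_n^2 L_n^2$ to $\|Q e^{-it_nH}\delta_0\|^2$, where $a_n:=|\langle\delta_0,\phi_n^\pm\rangle|$.

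The crux is to tune the three free parameters --- the separation $L_n$, the splitting $\delta_n$, and the overlap $a_n$ --- against each other. Exponential localization with decay rate $\gamma_n$ across a barrier of width $L_n$ forces the splitting to be of order $e^{-\gamma_n L_n}$, while the ballistic bound forces the resonance time $t_n\sim e^{\gamma_n L_n}$ to exceed $L_n$; these constraints meet when $\gamma_n\sim \ln L_n / L_n$, giving a \emph{merely polynomially} small splitting $\delta_n\sim 1/L_n$ and resonance time $t_n\sim L_n$. The key point that allows $f$ to be arbitrary is that the decay rates are permitted to vanish: each eigenfunction still decays exponentially (so the eigenfunctions are ``exponentially decaying'' as required), but $\gamma_n\to0$ makes the splittings only polynomially small rather than exponentially small. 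With $t_n\sim L_n$, the moment obeys $\|Q e^{-it_nH}\delta_0\|\gtrsim a_n L_n\sim a_n t_n$, where $a_n$ is made small by placing the near-origin well so that $\delta_0$ meets only the exponential tail of the resonant pair. Choosing $a_n\sim f(t_n)^{-1/2}$ --- which tends to $0$, consistently with $\|Q e^{-itH}\delta_0\|=o(t)$, and is summable once the scales $t_n$ are taken sparse enough that $\sum_n 1/f(t_n)<\infty$ --- then yields $\frac{f(t_n)}{t_n}\|Q e^{-it_nH}\delta_0\|\gtrsim \sqrt{f(t_n)}\to\infty$, which is exactly the claimed divergence of the $\limsup$.

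It remains to realize a single bounded $V:\bbZ\to\bbR$ producing, at a sparse sequence of scales, these resonant pairs with the prescribed separations, decay rates, splittings, and overlaps, while guaranteeing that the full spectrum is pure point with a complete basis of exponentially decaying eigenfunctions. I would assemble $V$ from well-separated building blocks (mirror-symmetric wells placed near $0$ and near $L_n$, separated by barriers whose heights are calibrated to the decay rate $\gamma_n$), controlling the eigenpairs and their splittings quantitatively via transfer-matrix and Pr\"ufer-variable estimates; completeness and exponential decay of all eigenfunctions can be secured from the standard localization toolbox (positivity of the relevant Lyapunov-type exponents together with a spectral-averaging or Simon--Wolff argument, as in the rank-one-perturbation constructions of \cite{del1996operators}). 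The main obstacle is precisely this simultaneous bookkeeping: one must engineer infinitely many independent near-resonances across all scales, each with nonzero overlap with the \emph{fixed} vector $\delta_0$, ensure that distinct scales do not interfere (so that each designed pair is genuinely an eigenpair of the full operator up to negligible error), and verify that no residual continuous spectrum survives. The delicate balance is between localization and transport --- the decay rates must vanish slowly enough to keep the splittings polynomial (hence the transport nearly ballistic) yet stay strictly positive so that every eigenfunction decays exponentially; refining the $\sqrt{\ln t}$ of the preceding theorem to an arbitrary $f\to\infty$ comes from this extra freedom in choosing the sparse scales and overlaps, together with the fact that only a $\limsup$ along the resonance times is required.
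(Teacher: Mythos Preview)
The paper does not contain a proof of this theorem. It is a survey, and the result is simply quoted from \cite{DFESO2} (following the earlier half-line construction in \cite{del1996operators}); no argument is given in the text. There is therefore nothing in the paper against which to compare your proposal.

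As a standalone assessment: your sketch identifies the correct mechanism --- resonant two-level tunneling across sparse double wells, with the separation $L_n$, the splitting $\delta_n$, and the overlap $a_n=|\langle\delta_0,\phi_n^\pm\rangle|$ tuned against one another so that at the resonance times $t_n$ one has $\|Q(t_n)\delta_0\|\gtrsim a_n L_n$ with $L_n\sim t_n$. The observation that the upgrade from $\sqrt{\ln t}$ to arbitrary $f$ comes from the freedom to choose the overlaps and the sparseness of scales, and that only a $\limsup$ along the $t_n$ is needed, is exactly the point. The device of letting the localization lengths $1/\gamma_n$ diverge so that splittings become polynomial rather than exponential is one viable route to $t_n\sim L_n$; this is delicate but not wrong.

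What you have written is, by your own admission, a plan rather than a proof. The ``simultaneous bookkeeping'' you flag --- decoupling the scales so each designed pair is genuinely an eigenpair of the full operator, guaranteeing pure point spectrum with a complete basis of exponentially decaying eigenfunctions (despite $\gamma_n\to0$), and controlling the contribution of all other eigenfunctions at time $t_n$ --- is the entire technical content of the result, and none of it is carried out here. These are not insurmountable, and the references you invoke (\cite{del1996operators}, transfer-matrix and Simon--Wolff methods) are the right toolbox, but the proposal as written stops well short of a proof.
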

These results tell us that pure point states exhibit an absence of ballistic transport, but could have transport arbitrarily close to ballistic transport.\par
An extension of this result would be
\begin{question}\label{QuestionAbsance}
    Does Theorem \ref{Absance} hold for all moments?
\end{question}
One may ask whether the converse holds: If a state exhibits an absence of ballistic transport for each $p$-th moment, that is, $\psi$ is such that 
\begin{align*}
    \lim_{t\to\infty}\frac{1}{t^p}\||Q|^p(t)\psi\|=0
\end{align*}
for all $p>0$, does that imply that it is a pure point state? In $d\geq 3$ dimensions, the answer is no. An example is given in \cite{bellissard2000subdiffusive} where an a.c. state with subdiffusive Ces\`aro-averaged moment (that is $\braket{\beta}^\pm (2)<\frac{1}{2}$) transport was built. In fact, there were also examples of absolutely continuous states with $\beta^+(2)=0$, see \cite{vidal1999spectrum} for more details.   \par
Furthermore, in \cite{damanik2008quantum}, the authors showed that following
\begin{theorem}[\cite{damanik2008quantum}, Theorem 2]\label{Fibo}
    Let $H$ be given by
    \begin{align*}
        H=-\Delta+V
    \end{align*}
    acting on $\ell^2(\bbZ)$, where
    \begin{align*}
        V(n)=\lambda \chi_{1-\phi,1)}(n\phi+\theta \!\!\! \mod 1)
    \end{align*}
with $\lambda \geq 8$, $\phi=\frac{\sqrt{5}-1}{2}$ the inverse of the golden ratio, and $\theta \in [0,1)$.\par
    Denote
    \begin{align*}
        S(\lambda)=\frac{\lambda-4+\sqrt{(\lambda-4)^2-12}}{2}.
    \end{align*}
    Then we have that 
    \begin{align*}
         \beta_{\delta_0}^+(p)\leq \frac{2\log (1+\phi)}{\log (S(\lambda))}
    \end{align*}
    for all $p>0$. 
\end{theorem}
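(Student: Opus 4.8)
The plan is to bound the transport exponents from above by controlling the growth of the one-step transfer matrices $M(n,E)$ associated with the Fibonacci operator, and then to feed a power-law lower bound on $\|M(n,E)\|$ into the dynamical upper-bound machinery of Damanik--Tcheremchantsev (cf.\ \cite{damanik2008quantum,damanik2010general}). The principle behind that method is that the Abel-averaged probabilities of $\delta_0$ can be rewritten, via Plancherel in the time variable, through boundary values of the resolvent,
\begin{align*}
\frac{2}{T}\int_0^\infty e^{-2t/T}\,|\langle\delta_n,e^{-itH}\delta_0\rangle|^2\,dt=\frac{1}{\pi T}\int_{-\infty}^{\infty}\bigl|\langle\delta_n,(H-E-i/T)^{-1}\delta_0\rangle\bigr|^2\,dE,
\end{align*}
and the resolvent matrix elements $\langle\delta_n,(H-E-i/T)^{-1}\delta_0\rangle$ are in turn governed by the transfer matrix norms $\|M(n,E+i/T)\|$. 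The upshot of the abstract upper-bound theorem I would invoke is that a uniform lower bound of the form $\|M(n,E)\|\geq c\,n^{\gamma}$, valid for all $E$ in an interval containing $\sigma(H)$, forces $\beta^{+}_{\delta_0}(p)\leq 1/\gamma$ for every $p>0$.

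The substantive task is therefore to establish such a lower bound with $\gamma=\frac{\log S(\lambda)}{2\log(1+\phi)}$, for then $1/\gamma=\frac{2\log(1+\phi)}{\log S(\lambda)}$ is exactly the asserted value. I would first exploit the self-similar combinatorial structure of the Fibonacci potential. Writing $F_k$ for the Fibonacci numbers and $M_k(E)=M(F_k,E)$ for the transfer matrix across a block of length $F_k$, the Fibonacci substitution gives the concatenation identity $M_{k+1}=M_{k-1}M_k$, and the half-traces $x_k=\tfrac12\,\mathrm{tr}\,M_k$ obey the trace map $x_{k+1}=2x_kx_{k-1}-x_{k-2}$ together with the Fricke--Vogt invariant $x_{k+1}^2+x_k^2+x_{k-1}^2-2x_{k+1}x_kx_{k-1}-1=\lambda^2/4$. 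Since on $\sigma(H)$ the Lyapunov exponent vanishes, the norms $\|M_k\|$ grow only sub-exponentially in the length $F_k$, which is precisely why one obtains a polynomial (rather than exponential) lower bound and hence a nontrivial upper bound on transport; for $E$ in a spectral gap the transfer matrices grow exponentially, so the power-law bound holds there automatically.

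The heart of the matter, and the step I expect to be the main obstacle, is to extract from this dynamical system a lower bound $\|M_k(E)\|\geq c\,S(\lambda)^{k/2}$ that is \emph{uniform} over all $E\in\sigma(H)$. The threshold $\lambda\geq 8$ is what makes this possible: it guarantees $(\lambda-4)^2\geq 12$, so that $S(\lambda)$—the larger root of $S^2-(\lambda-4)S+3=0$—is real, and it places the trace map in the hyperbolic regime where the spectrum is a Cantor set and the orbit $(x_k)$ is controlled for every $E$. Concretely, I would partition the energy axis into the bands of the periodic approximants, propagate the two-sided norm estimates for $\|M_k\|$ and $\|M_{k-1}\|$ through the concatenation $M_{k+1}=M_{k-1}M_k$ using $SL(2,\mathbb{R})$ and trace identities, and isolate the slowest possible growth rate; a case analysis shows the worst case per Fibonacci step is governed by $S(\lambda)^{1/2}$. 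Doing this uniformly in $E$, rather than at a single energy, is the delicate point, since one must preclude cancellation in the matrix products along the entire trace-map orbit.

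Finally, I would convert the block estimate into the required power law. Since $F_k\sim c(1+\phi)^k$, where $1+\phi=\tfrac{1+\sqrt5}{2}$ is the golden ratio, we have $\log F_k\sim k\log(1+\phi)$, so for $n\approx F_k$ the bound $\|M_k(E)\|\geq c\,S(\lambda)^{k/2}$ reads $\|M(n,E)\|\geq c'\,n^{\gamma}$ with $\gamma=\frac{\log S(\lambda)}{2\log(1+\phi)}$; interpolating over intermediate lengths $F_k\leq n<F_{k+1}$, again via the concatenation structure, promotes this to all $n$. Inserting $\gamma$ into the Damanik--Tcheremchantsev criterion yields $\beta^{+}_{\delta_0}(p)\leq 1/\gamma=\frac{2\log(1+\phi)}{\log S(\lambda)}$ for all $p>0$, as claimed.
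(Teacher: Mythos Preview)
The survey does not prove this theorem; it is quoted from \cite{damanik2008quantum} and used only as an illustration that continuous spectrum can coexist with sub-ballistic transport. Your outline is precisely the strategy of that original reference: derive a uniform power-law lower bound $\|M(n,E)\|\gtrsim n^{\gamma}$ from the Fibonacci trace map in the hyperbolic regime $\lambda\geq 8$, then feed it into the Damanik--Tcheremchantsev dynamical upper-bound criterion to obtain $\beta^{+}_{\delta_0}(p)\leq 1/\gamma$.

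One point worth tightening if you were to write this out in full: the abstract criterion in \cite{damanik2008quantum} requires the transfer-matrix lower bound not merely for real $E$ in an interval containing $\sigma(H)$ but for complex energies $E+i\varepsilon$ in a strip (equivalently, uniformly as one approaches the real axis), since the Parseval identity you wrote produces resolvents at $E+i/T$. You allude to this in the first paragraph but then phrase the hypothesis for real $E$ only; the passage from the real-energy trace-map bound to the complex-strip bound is handled in the original paper and is not automatic.
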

In particular, this theorem tells us that there are states (which, in fact, belong to the singular continuous subspace) with $\beta(2) < 1$. Choosing $\lambda$ carefully allows for an arbitrary slow spread. So, in particular, this implies the existence of non pure point states that do not exhibit ballistic transport for any moment. \par
A natural follow-up question: is there some stronger spectral condition that will force ballistic transport? The answer is yes. This result appeared in \cite{combes1992some}:
\begin{proposition}[\cite{combes1992some}, Proposition 1.2]
    Let $H$ be a Hamiltonian acting on $L^2(\bbR^d)$ and let $I\subset \sigma(H) $ be an interval on which there is a strict Mourre estimate, that is, for $A=[Q^2,H]$,
    \begin{align*}
        \exists \alpha>0, \chi_I(H)[H,A]\chi_I(H)\geq \alpha \chi_I(H).
    \end{align*}
    Then for $\psi \in \chi_I(H)\calH\cap D(Q)\cap H^1(\bbR^d)$, 
    \begin{align*}
        \|Q(t)\psi\|\geq C\alpha t
    \end{align*}
    for some constant $C>0$. This, combined with the Radin-Simon ballistic upper bound, implies ballistic transport in the strong sense.
\end{proposition}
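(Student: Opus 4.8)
The plan is to prove the linear lower bound $\|Q(t)\psi\|\geq Ct$ via the classical \emph{positive commutator} (Heisenberg equations of motion) argument, which is exactly what the Mourre estimate is designed to feed. First I would unpack the conjugate operator: since $V$ acts by multiplication it commutes with $Q^2=|Q|^2$, so
\begin{align*}
A=[Q^2,H]=[Q^2,-\Delta]=2d+4\,Q\cdot\nabla=4iD,
\end{align*}
where $D=\tfrac12(Q\cdot(-i\nabla)+(-i\nabla)\cdot Q)$ is the generator of dilations. Thus $A$ is anti-self-adjoint, $[H,A]$ is self-adjoint, and the Mourre inequality $\chi_I(H)[H,A]\chi_I(H)\geq \alpha\,\chi_I(H)$ is the usual dilation-generator estimate (for $H_0$ it reduces to $[H_0,A]\propto H_0>0$ away from the bottom of the spectrum). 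Write $\psi_t=e^{-itH}\psi$ and set $g(t)=\|Q(t)\psi\|^2=\langle\psi_t,Q^2\psi_t\rangle$, using $\|Q(t)\psi\|=\|Q\psi_t\|$ by unitarity.

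The core of the argument is to differentiate $g$ twice. Formally the Heisenberg equations give
\begin{align*}
g'(t)=\langle\psi_t,\,i[H,Q^2]\,\psi_t\rangle=-i\langle\psi_t,A\psi_t\rangle,\qquad
g''(t)=\langle\psi_t,\,[H,A]\,\psi_t\rangle .
\end{align*}
Now I would exploit the spectral localization: since $\chi_I(H)\psi=\psi$ and $\chi_I(H)$ commutes with $e^{-itH}$, we have $\chi_I(H)\psi_t=\psi_t$ for all $t$, so the Mourre estimate applies directly to $g''$:
\begin{align*}
g''(t)=\langle\psi_t,\chi_I(H)[H,A]\chi_I(H)\psi_t\rangle\geq \alpha\langle\psi_t,\chi_I(H)\psi_t\rangle=\alpha\|\psi\|^2>0 .
\end{align*}
Integrating twice yields $g(t)\geq g(0)+g'(0)t+\tfrac{\alpha}{2}\|\psi\|^2 t^2$, so $\|Q(t)\psi\|\geq Ct$ for all large $t$, with $C>0$ controlled from below by $\alpha$ and $\|\psi\|$. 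Combining this lower bound with the Radin--Simon upper bound $\|Q(t)\psi\|\leq C'(1+|t|)\|\psi\|_{S_1}$ (Theorem~\ref{Radin-Simon}, applicable since $\psi\in D(Q)\cap H^1(\bbR^d)$, which also guarantees $\psi_t\in D(Q)$ for all $t$) produces the two-sided estimate $ct\leq\|Q(t)\psi\|\leq Ct$.

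The main obstacle is making the two differentiations rigorous, since $A$ and the double commutator $[H,A]$ are unbounded. I would not differentiate $\langle\psi_t,Q^2\psi_t\rangle$ directly; instead I would regularize, replacing $Q^2$ by the bounded functions $F_\varepsilon(Q)=Q^2(1+\varepsilon Q^2)^{-1}$, for which $t\mapsto\langle\psi_t,F_\varepsilon(Q)\psi_t\rangle$ is genuinely differentiable with $\tfrac{d}{dt}=\langle\psi_t,i[H,F_\varepsilon(Q)]\psi_t\rangle$. The work is then to (i) control the commutators $i[H,F_\varepsilon(Q)]$ and $[H,[F_\varepsilon(Q),H]]$ uniformly in $\varepsilon$, using that $H$ is of class $C^2$ with respect to $A$ (so that the double commutator is suitably $H$-bounded, which is the standing hypothesis behind a Mourre estimate), and (ii) pass to the limit $\varepsilon\to0$, with Radin--Simon supplying the uniform domain bounds that justify the monotone/dominated limits. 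This regularization-and-removal step, together with verifying that the positivity survives the limit on $\chi_I(H)\mathcal{H}$, is where essentially all the technical care resides; the algebra above is the skeleton.

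Finally, I would flag a subtlety in the concluding sentence: the commutator argument plus Radin--Simon delivers the \emph{norm-growth} two-sided bound $ct\leq\|Q(t)\psi\|\leq Ct$, hence ballistic transport in the norm-growth sense (and, via Proposition~\ref{Implications}, in the exponent and Abel-average senses). Upgrading to \emph{strong} ballistic transport, i.e.\ norm-convergence of $Q(t)\psi/t$ to a nonzero vector, is not automatic from two-sided bounds and would require an additional input establishing existence of the limit (for instance a propagation estimate showing $\tfrac{d}{dt}(Q(t)/t)\psi$ is integrable, or an asymptotic-velocity argument in the spirit of the scattering results of Section~\ref{scattering}); I would either supply such an argument or state the conclusion at the level of the norm-growth notion.
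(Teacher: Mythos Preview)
Your approach is essentially the same as the paper's: the paper applies the integrated Heisenberg identity $\langle\varphi,B(t)\varphi\rangle=\langle\varphi,B\varphi\rangle+\int_0^t\langle\varphi,i[H,B](s)\varphi\rangle\,ds$ twice to $B=Q^2$ (the integral form of your two differentiations of $g(t)$), then feeds in the Mourre estimate on the double commutator to obtain $\|Q(t)\psi\|^2\geq \langle\psi,Q^2\psi\rangle - t\langle\psi,iA\psi\rangle+\tfrac{\alpha}{2}\|\psi\|^2 t^2$, exactly as you do; the paper does not carry out the regularization you outline, working formally at the level of the identity. Your closing observation is correct and worth noting: the paper's own proof, like yours, only produces the two-sided norm bound $ct\leq\|Q(t)\psi\|\leq Ct$, i.e.\ ballistic transport in the norm-growth sense, and does not supply the additional argument needed to upgrade to convergence of $Q(t)\psi/t$.
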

\begin{remark}
    A strict Mourre estimate is known to hold in many cases; in particular, in \cite{derezinski1997scattering}  it is shown that for rapidly decaying potentials, such estimates hold. 
\end{remark}
\begin{proof}
      In \cite{combes1992some}, the proof is given in terms of 
    \begin{align*}
        \lim_{t \to \infty} \Big\|\frac{Q(t)}{t^{\frac{1}{2}}}\psi\Big\|=\infty. 
    \end{align*}
    But in fact, the proof allows for a much more general statement, and so we include the proof for the sake of completeness. This is a direct propagation estimate. Let $\psi \in \chi_I(H)\calH\cap D(Q)\cap D(H)$. Here, we will use the following identity, which holds for any operator $B$ and any $\varphi \in D(B)\cap D(H)\cap D([B, H])$:
    \begin{align*}
        \braket{\varphi ,B(t)\varphi }=\braket{\varphi ,B\varphi }+\int\limits_0^t \braket{\varphi,[B,H]\varphi}.
    \end{align*}
    Applying this identity twice yields the following:
    \begin{align*}
        &\|Q(t)\psi\|^2=\braket{\psi,Q^2(t)\psi}= \braket{\psi,Q^2\psi }+\int\limits_0^t\braket{\psi, i[H,Q^2](s)\psi }\, ds\\
        & = \braket{\psi,Q^2\psi }-\int\limits_0^t\braket{\psi, iA(s)\psi }\, ds \\
        & = \braket{\psi,Q^2\psi }-\int\limits_0^t[\braket{\psi, iA\psi }+\int\limits_0^s\braket{\psi, i[H,iA](\sigma)\psi }\, d\sigma]\, ds\\
        &=\braket{\psi,Q^2\psi }-t\braket{\psi,iA\psi}+\int\limits_0^t\int\limits_0^s\braket{\psi_\sigma, [H,A]\psi_\sigma }\, d\sigma\, ds\\
        &\geq \braket{\psi,Q^2\psi }-t\braket{\psi,iA\psi}+\int\limits_0^t\int\limits_0^s\alpha \braket{\psi_\sigma, \psi_\sigma }\, d\sigma\, ds\\
        &=\braket{\psi,Q^2\psi }-t\braket{\psi,iA\psi}+\alpha \|\psi\|^2\frac{t^2}{2}.
    \end{align*}
    Since $\braket{\psi,Q^2\psi },\braket{\psi,iA\psi}$ are bounded, we get that
    \begin{align*}
         \|Q(t)\psi\|^2\geq C\alpha \|\psi\|^2\frac{t^2}{2}
    \end{align*}
    for some $0<C$, as needed.
\end{proof}
The Mourre estimate mentioned above is a common estimate used in scattering theory (see \cite{derezinski1997scattering} for example for more details), and in particular, it implies purely absolutely continuous spectrum in this spectral interval. So, a ``strongly" purely absolutely continuous interval, in the sense of a Mourre estimate, implies ballistic transport.

A related question is whether the convergence of all moments and knowing that there is no kernel to the limiting operator excludes the existence of pure point states. One immediate implication of Theorem \ref{Absance} is the following:
\begin{corollary}
    Let $H$ be such that every nonzero $\psi\in D(Q)\cap D(H)$  exhibits ballistic transport in the strong sense. Then we have 
    \begin{align*}
        \calH_{pp}\subset \{0\} \cup (\calH\setminus D(Q)).
    \end{align*}
\end{corollary}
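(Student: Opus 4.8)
The plan is to prove the contrapositive via Theorem~\ref{Absance}. Suppose, toward a contradiction, that there is a nonzero $\psi \in \calH_{\mathrm{pp}} \cap D(Q) \cap D(H)$. Since this $\psi$ lies in $D(Q)\cap D(H)$, the hypothesis guarantees that $\psi$ exhibits ballistic transport in the strong sense, so $\frac{Q(t)}{t}\psi \to P_\psi$ for some \emph{nonzero} state $P_\psi$. In particular,
\begin{align*}
    \lim_{t\to\infty}\left\|\frac{Q}{t}e^{-itH}\psi\right\| = \lim_{t\to\infty}\left\|\frac{Q(t)}{t}\psi\right\| = \|P_\psi\| > 0,
\end{align*}
where the first equality uses unitarity of $e^{itH}$ to move the left exponential out of the norm. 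On the other hand, since $\psi \in \calH_{\mathrm{pp}} \cap D(Q) \cap D(H)$ and $V$ is relatively bounded (so Theorem~\ref{Absance} applies), we have $\lim_{t\to\infty}\|\frac{Q}{t}e^{-itH}\psi\| = 0$. These two statements are contradictory, so no such nonzero $\psi$ can exist.

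First I would make explicit the identification $\|\frac{Q}{t}e^{-itH}\psi\| = \|\frac{Q(t)}{t}\psi\|$, since the corollary's conclusion is phrased in terms of $Q(t) = e^{itH}Q e^{-itH}$ while Theorem~\ref{Absance} is phrased in terms of $\frac{Q}{t}e^{-itH}$; the equality is immediate from the unitarity of $e^{itH}$ and the definition of the Heisenberg evolution. Second I would invoke the strong-transport hypothesis to extract the nonzero limit $P_\psi$ and note $\|P_\psi\| > 0$ by definition of strong ballistic transport. Third I would apply Theorem~\ref{Absance} to get the vanishing limit, and conclude. The desired containment $\calH_{pp}\subset \{0\}\cup(\calH\setminus D(Q))$ then follows: any $\psi \in \calH_{pp}$ that is nonzero and lies in $D(Q)$ must in fact fail to lie in $D(H)$ — but one checks that $\calH_{pp}\cap D(Q)\cap D(H)$ being forced to $\{0\}$ already yields the stated set-theoretic inclusion, since a nonzero pure point state in $D(Q)$ cannot simultaneously lie in $D(H)$.

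I do not expect a serious obstacle here, as the statement is essentially a direct corollary. The one point requiring mild care is matching the precise form of the conclusion to Theorem~\ref{Absance}: the corollary states $\calH_{pp}\subset \{0\}\cup(\calH\setminus D(Q))$, which should be read as asserting that every nonzero element of $\calH_{pp}\cap D(Q)$ fails to lie in $D(H)$ (equivalently, $\calH_{pp}\cap D(Q)\cap D(H) = \{0\}$), and I would state this equivalence explicitly rather than leaving the reader to parse the complement. A second small subtlety is that Theorem~\ref{Absance} is stated for $H=-\Delta+V$ with $V$ relatively bounded, so I would note that the corollary implicitly inherits this standing assumption on $H$ (which is consistent with ``$H$ as above'' throughout the section); the strong ballistic transport hypothesis on all of $D(Q)\cap D(H)$ is what drives the argument, and no further regularity is needed.
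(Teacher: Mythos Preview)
Your core argument---contradict Theorem~\ref{Absance} against the strong ballistic transport hypothesis---is exactly the paper's proof, which is essentially your first paragraph compressed to two lines: take $\psi\in\calH_{pp}\cap D(Q)$, observe $\|\tfrac{Q(t)}{t}\psi\|\to 0$, and conclude $\psi=0$ since otherwise the hypothesis would force a nonzero limit.

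The trouble is in your second paragraph. The stated conclusion $\calH_{pp}\subset \{0\}\cup(\calH\setminus D(Q))$ means precisely $\calH_{pp}\cap D(Q)=\{0\}$; it is \emph{not} equivalent to the weaker statement $\calH_{pp}\cap D(Q)\cap D(H)=\{0\}$ that your argument (as written, starting from $\psi\in\calH_{pp}\cap D(Q)\cap D(H)$) actually establishes. Your claim that the latter ``already yields the stated set-theoretic inclusion'' is false, and the justification you offer---``since a nonzero pure point state in $D(Q)$ cannot simultaneously lie in $D(H)$''---merely restates what you proved rather than bridging the gap to the stronger conclusion. The paper's proof sidesteps this by taking $\psi\in\calH_{pp}\cap D(Q)$ from the outset and invoking both tools directly. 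You have correctly spotted that Theorem~\ref{Absance} and the corollary's hypothesis are both stated on $D(Q)\cap D(H)$, but the resolution is not to reinterpret the conclusion as you do. In the discrete setting $D(H)=\calH$ and the issue disappears entirely; in the continuum the paper is tacitly treating $\psi$ as lying in $D(H)$ without comment, so your instinct that something needs checking here is sound---only your proposed reading of the corollary is not.
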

\begin{proof}
    Let $\psi \in \calH_{pp}\cap D(Q)$. By \cite[Theorem A.1]{black2023directional}, we have that $\|\frac{Q(t)}{t}\psi\|\rightarrow 0$, and hence, by the strong ballistic transport assumption, we must have $\psi=0$, as claimed.
\end{proof}
In one dimension, by Combes-Thomas estimates, this quickly implies $\sigma_d(H)=\emptyset$. However, this does not exclude the existence of eigenfunctions in $\calH\setminus D(Q)$; following the construction of a Wigner-Von Neumann type potential in \cite[XIII.13 ]{RSVol4}, one can construct an example of a potential with an eigenfunction that has decay of $\psi(x)\sim x^{-\frac{3}{4}}$ (on $\bbR$). In particular this implies that $ (\calH\setminus D(Q))\cap \calH_{pp}\neq \emptyset$. For more information about the construction of eigenfunctions of Wigner-Von Neumann type potentials with precise power law decay asymptotics, see \cite{LukicWignerVonNeumann,WVNNaboko,WVNSimon}.
This leads to an open question:
\begin{question}
     Let $H$ be such that every nonzero $\psi\in \bigcup_{p>0}D(|Q|^p)\cap D(H)$ exhibits ballistic transport in the strong sense for all the defined moments. Does it follow that 
    \begin{align*}
        \calH_{pp}= \{0\}?
    \end{align*}
\end{question}
\begin{remark}
    Given Theorem 4.4, a possible path towards answering this question (with a no) is clear: first, one needs to prove open question \ref{QuestionAbsance} and then show the existence of slowly decaying eigenfunctions (which will not be in $D(|Q|^p)$ for any $p>0$). 
\end{remark}

We now turn to the presence of transport depending only on the spectrum of the underlying operator. The works of Guarneri, Combes, and Last \cite{Combes, Guarneri1, Guarneri2, Last} proved lower bounds on the Ces\`aro averaged second moment for states with nonzero $\alpha$-continuous part in the Rogers-Taylor decomposition of their spectral measures. In particular, their results imply:
\begin{theorem}\cite[Theorems 6.1, 6.2]{Last}
   Let $H$ be a Hamiltonian on $\calX=\bbR^d,\bbZ^d$ with potential $V\in L^\infty(\calX)$ and $\psi$ a state whose spectral measure is not supported on a set of $\alpha$-dimensional Hausdorff measure $0$. Then, for any $m>0$, there exists a constant $C=C(\psi,m)$ with 
   \begin{align}\label{Last}
       \braket{\braket{Q_\psi ^m}}_C(T)\geq CT^{m\alpha/d}.
   \end{align}
\end{theorem}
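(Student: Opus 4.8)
The plan is to follow the Guarneri--Combes--Last method, which converts the spectral continuity hypothesis into a lower bound on how fast probability mass must leak out of balls centered at the origin. I would work in the discrete setting for concreteness (the continuum argument being identical, with sums over lattice points replaced by traces of $\chi_{B_L}$ against spectral projections) and write $a_\psi(n,t)=\langle\delta_n,e^{-itH}\psi\rangle$, so that $\braket{\braket{Q_\psi^m}}_C(T)=\frac1T\int_0^T\sum_n|n|^m|a_\psi(n,t)|^2\,dt$. The strategy is to show that for $L$ of order $T^{\alpha/d}$ a fixed positive fraction of the mass sits outside the ball $B_L$ after time-averaging; bounding $|n|^m\ge L^m$ on that region then yields $\braket{\braket{Q_\psi^m}}_C(T)\ge c\,L^m$, which is of order $T^{m\alpha/d}$.

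The analytic core is the time-averaging estimate of Combes and Last: there is a universal $C$ with
\[
\frac1T\int_0^T\sum_{|n|\le L}|a_\psi(n,t)|^2\,dt\le C\,N(L)\,\sup_{E\in\mathbb R}\mu_\psi\!\big([E-\tfrac{1}{T},E+\tfrac{1}{T}]\big),
\]
where $N(L)=\#\{n:|n|\le L\}\le C'L^d$. I would prove this by expanding each $a_\psi(n,t)$ against the complex spectral measure $\langle\delta_n,E(\cdot)\psi\rangle$, carrying out the $t$-integral to produce a Fej\'er-type kernel essentially supported on energy differences at most $1/T$, and then summing over $|n|\le L$ using $\sum_{|n|\le L}\langle\delta_n,E(B)\delta_n\rangle=\mathrm{Tr}(\chi_{B_L}E(B)\chi_{B_L})\le N(L)$ together with a Cauchy--Schwarz bound on the cross measures. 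This is precisely where the factor $L^d$ (density of states in the box) and the factor $\sup_E\mu_\psi([E-1/T,E+1/T])$ (spectral concentration at scale $1/T$) separate out.

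To feed the hypothesis into this estimate I would invoke the Rogers--Taylor theory. Since $\mu_\psi$ is not carried by a set of $\alpha$-dimensional Hausdorff measure zero, its $\alpha$-continuous component is nonzero; restricting to a Borel set $A$ of positive $\mu_\psi$-measure on which the upper $\alpha$-derivative is uniformly bounded produces a state $\tilde\psi=\chi_A(H)\psi$ with $\|\tilde\psi\|^2=c_0>0$ whose spectral measure is uniformly $\alpha$-H\"older continuous, $\sup_E\mu_{\tilde\psi}([E-\epsilon,E+\epsilon])\le C\epsilon^\alpha$. Applying the key estimate to $\tilde\psi$ bounds the inside-ball mass by $C L^d T^{-\alpha}$, so taking $L=\kappa T^{\alpha/d}$ with $\kappa$ small forces $\frac1T\int_0^T\sum_{|n|>L}|a_{\tilde\psi}(n,t)|^2\,dt\ge c_0/2$ and hence $\braket{\braket{Q_{\tilde\psi}^m}}_C(T)\ge L^m c_0/2\ge CT^{m\alpha/d}$.

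The main obstacle is the passage from the restricted state $\tilde\psi$ back to $\psi$. Because the position-space amplitudes of $\tilde\psi=\chi_A(H)\psi$ and of $\phi=\psi-\tilde\psi$ are not orthogonal at fixed $(n,t)$, one cannot simply discard $\phi$: the elementary inequality $|a_\psi|^2\ge\tfrac12|a_{\tilde\psi}|^2-|a_\phi|^2$ produces a negative moment contribution $-\braket{\braket{Q_\phi^m}}_C(T)$ that need not be dominated when $\phi$ itself transports quickly (note also that if $\mu_\psi$ has a point mass, then $\sup_E\mu_\psi([E-1/T,E+1/T])$ does not decay, so the escape argument genuinely cannot be run on the full state). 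Handling this — either by choosing $A$ to exhaust the relevant $\alpha$-content and controlling the remainder's contribution to the weighted sum, or by running the estimate directly on $\psi$ with a spectral quantity that is monotone under enlarging the support — is the delicate point, and is exactly what the full arguments behind Theorems 6.1 and 6.2 of \cite{Last} carry out; the uniformly $\alpha$-H\"older case itself is comparatively clean.
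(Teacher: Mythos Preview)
The survey does not prove this theorem; it is quoted from \cite{Last} with attribution and no argument, the paper's interest being in its consequences rather than its derivation. Your sketch is the standard Guarneri--Combes--Last argument (time-averaged Strichartz-type bound on the in-ball probability, Rogers--Taylor to extract a uniformly $\alpha$-H\"older piece, then Chebyshev on the moment), and you correctly isolate the passage from $\tilde\psi$ back to $\psi$ as the one genuinely delicate step and defer it to \cite{Last}. There is simply no proof in the paper to compare against.
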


Typically, showing the quantitative characterizations of transport beyond the scope of the Guarneri-Combes-Last theorem requires model-dependent methods. However, we note some consequences of this theorem. Since $\beta^+(2)$ is bounded from below by its time-averaged analog, in one dimension, $\mu_\varphi$ having a nonzero absolutely continuous part is enough to imply $\beta_{\varphi}^+(2)=1$. However, it is not yet known whether in one dimension $\mu_\varphi$ having nonzero absolutely continuous part is enough to ensure $\beta_{\varphi}^-(2)=1$. We note this is known to be false in higher dimensions \cite{KiselevLast,schulz1998anomalous}, where many works often seek to show averaged ballistic lower bounds precisely of the form \eqref{Last}, for example, \cite{Stolzetall} in the two dimensional almost periodic setting. This discussion motivates the following question.
\begin{question}\label{ACBT}
    Let $H$ be a one-dimensional Schr\"odinger operator with purely absolutely continuous spectrum. Do the time dynamics generated by $H$ necessarily exhibit ballistic transport in any of the (non-time averaged) senses above?
\end{question}

This question remains open even for some particularly well-studied classes of operators. For example, the following special case of Question~\ref{ACBT}, most naturally stated for Jacobi matrices:
\begin{question}\label{finitegap}
    Does strong ballistic transport hold for all states $\psi \in D(Q)$ for all finite-gap Jacobi matrices? 
\end{question}

These operators are known to have absolutely continuous spectrum. In fact, they are examples of operators in the Sodin-Yuditskii class of reflectionless Jacobi matrices whose spectrum forms a homogeneous set in the sense of Carleson \cite{SYJacobi}. So, a more general version of Question~\ref{finitegap} may be posed while still stopping short of the generality of Question~\ref{ACBT}:
\begin{question}\label{SYClass}
    Does strong ballistic transport hold for a dense set of states for Jacobi matrices in the Sodin-Yuditskii class? 
\end{question}
We note that a continuum version of this question may also be posed; see \cite{SYSchrod}.  

\par
In this context, it is worth noting that some interesting examples in higher dimensions can be found by considering a separable potential. This is based on the following proposition:
\begin{proposition}
   Let $\calH_1,\calH_2$ be two Hilbert spaces as above, where the ambient space of $\calH_i$, denoted $\calX_i$, has dimension $d_i$, for $i\in \{1,2\}$. For $i\in \{1,2\}$, let $Q_i$ be the corresponding position operator, $V_i $ a multiplication operator on $\calH_i$, and $H_i=-\Delta_i+V_i$ the corresponding Hamiltonian. We will consider the separable Hamiltonian 
   \begin{align*}
       H=H_1+H_2=-\Delta_{1+2}+V_1+V_2
   \end{align*}
   acting on $\calH_1\otimes \calH_2$, where $-\Delta_{1+2}$, is the Laplacian acting on $d_1+d_2$ dimensional space.  \par
   Let $\psi \in D(Q_1),\varphi \in D(Q_2)$. Then
   \begin{itemize}
       \item If $\psi$ exhibits ballistic transport in the strong sense for the $p$-th moment, and $\varphi$ exhibits strong ballistic transport or absence of ballistic transport for the same moment, then $\psi\otimes \varphi$ exhibits ballistic transport in the strong sense for the $p$-th moment.
       \item If $\psi $ exhibits ballistic transport in the exponent sense  for the $p$-th moment, then 
       \begin{align*}
           \beta_{\psi\otimes \varphi}^{-}(p)\geq 1.
       \end{align*}
       If it is only in the Abel averaged exponent sense, then 
       \begin{align*}
           \tilde{\beta}^{-}_{\psi\otimes \varphi}(p)\geq 1.
       \end{align*}
       \item If $\psi $ exhibits ballistic transport in the Abel average sense  for the $p$-th moment, then 
       \begin{align*}
           cT^{2p}\leq \braket{\braket{|Q|_{\psi\otimes \varphi}^{2p}}}_A.
       \end{align*} 
       
   \end{itemize}
\end{proposition}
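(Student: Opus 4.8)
The plan is to reduce every claim to the product structure of the dynamics together with a single pointwise inequality between the position observable of the product and that of the first factor. Throughout, ballistic transport for $\psi$ is understood with respect to $(H_1,Q_1)$ on $\calH_1$ and for $\varphi$ with respect to $(H_2,Q_2)$ on $\calH_2$.

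\textbf{Structural preliminaries.} Since $H=H_1\otimes I+I\otimes H_2$ with commuting summands, the propagator factorizes, $e^{-itH}=e^{-itH_1}\otimes e^{-itH_2}$. Writing the ambient variable as $\vec q=(\vec q_1,\vec q_2)$ with $\vec q_i$ the coordinate on $\calX_i$, we have $\|\vec q\|^2=\|\vec q_1\|^2+\|\vec q_2\|^2$, so as commuting positive multiplication operators $|Q|^2=|Q_1|^2\otimes I+I\otimes|Q_2|^2$ and, by the joint functional calculus, $|Q|^p=(|Q_1|^2\otimes I+I\otimes|Q_2|^2)^{p/2}$. Conjugation by the unitary $e^{-itH}$ commutes with both the functional calculus and the factorization, so with $B_i(t):=|Q_i|^2(t)/t^2\geq 0$,
\begin{align*}
\frac{|Q|^p(t)}{t^p}=\bigl(B_1(t)\otimes I+I\otimes B_2(t)\bigr)^{p/2}.
\end{align*}
Finally, the pointwise bound $\|\vec q\|^p\geq\|\vec q_1\|^p$ gives the operator inequality $|Q|^p\geq|Q_1|^p\otimes I\geq 0$, which is preserved under conjugation by $e^{-itH}$, hence $|Q|^p(t)\geq|Q_1|^p(t)\otimes I$. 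Evaluating the quadratic form at $\psi\otimes\varphi$ yields the master inequality
\begin{align}
\braket{\psi\otimes\varphi,\,|Q|^p(t)(\psi\otimes\varphi)}\geq\|\varphi\|^2\,\braket{\psi,\,|Q_1|^p(t)\psi}=\|\varphi\|^2\,\bigl\||Q_1|^{p/2}(t)\psi\bigr\|^2.\tag{$\star$}
\end{align}

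\textbf{Exponent and Abel-average bounds (second and third bullets).} These all follow from $(\star)$. For the exponent sense, note the left side of $(\star)$ equals $\||Q|^{p/2}(t)(\psi\otimes\varphi)\|^2$; taking logarithms, dividing by $p\log t$, and passing to the liminf kills the additive constant $\log\|\varphi\|^2$, so $\beta^-_{\psi\otimes\varphi}(p)\geq\beta^-_\psi(p)=1$ when $\psi$ has ballistic transport in the exponent sense. Integrating $(\star)$ against the Abel weight $\tfrac{2}{T}e^{-2t/T}$ gives $\braket{\braket{Q^p_{\psi\otimes\varphi}}}_A(T)\geq\|\varphi\|^2\braket{\braket{Q^p_\psi}}_A(T)$, and the same logarithmic manipulation yields $\tilde\beta^-_{\psi\otimes\varphi}(p)\geq 1$. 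For the third bullet, apply $(\star)$ with $2p$ in place of $p$ and integrate against the Abel weight to obtain $\braket{\braket{|Q|^{2p}_{\psi\otimes\varphi}}}_A(T)\geq\|\varphi\|^2\braket{\braket{|Q|^{2p}_\psi}}_A(T)\geq c\|\varphi\|^2 T^{2p}$, using the Abel-average lower bound granted by the hypothesis on $\psi$.

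\textbf{Strong ballistic transport (first bullet).} I would treat $p=2$ first, where the decomposition is additive: from the displayed identity, $\frac{|Q|^2(t)}{t^2}(\psi\otimes\varphi)=\bigl(B_1(t)\psi\bigr)\otimes\varphi+\psi\otimes\bigl(B_2(t)\varphi\bigr)$. By hypothesis $B_1(t)\psi\to P_1\neq 0$, and $B_2(t)\varphi\to P_2$ with $P_2\neq 0$ (strong transport) or $P_2=0$ (absence, in the strong sense); as the remaining tensor factor is fixed, each term converges in norm and $\frac{|Q|^2(t)}{t^2}(\psi\otimes\varphi)\to P_1\otimes\varphi+\psi\otimes P_2=:P$. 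To see $P\neq 0$, pair with $\psi\otimes\varphi$: by Proposition~\ref{Implications}(\ref{StrongToNorm}) (strong transport forces the norm-growth lower bound $\braket{\psi,|Q_1|^2(t)\psi}\geq ct^2$) we get $\braket{\psi,P_1}=\lim_t\braket{\psi,B_1(t)\psi}\geq c>0$, while $\braket{\varphi,P_2}\geq 0$, so $\braket{\psi\otimes\varphi,P}=\braket{\psi,P_1}\|\varphi\|^2+\|\psi\|^2\braket{\varphi,P_2}\geq c\|\varphi\|^2>0$.

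\textbf{General $p$ and the main obstacle.} The nonvanishing argument carries over verbatim for any $p>0$ once the limit $P=\lim_t\frac{|Q|^p(t)}{t^p}(\psi\otimes\varphi)$ is known to exist, since $(\star)$ together with the $p$-th moment norm-growth lower bound gives $\braket{\psi\otimes\varphi,P}\geq\|\varphi\|^2\braket{\psi,P_1}\geq c\|\varphi\|^2>0$. The genuinely harder point is the \emph{existence} of the limit for $p\neq 2$: the factor $(B_1(t)\otimes I+I\otimes B_2(t))^{p/2}$ no longer splits additively, and convergence of the $p/2$-powers $B_i(t)^{p/2}$ on $\psi$ and $\varphi$ does not obviously control the mixed powers arising in the functional calculus. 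The natural remedy is to upgrade the hypothesis to convergence of the second-moment velocity operators themselves, i.e.\ strong resolvent convergence $B_i(t)\to\mathcal B_i$; then $B_1(t)\otimes I+I\otimes B_2(t)$ converges in the strong resolvent sense, and the strong-resolvent continuity of the functional calculus (the truncation argument used in the proof of Proposition~\ref{Implications}(\ref{SRImpliesTE}) applied to $x\mapsto x^{p/2}$) delivers convergence of $\frac{|Q|^p(t)}{t^p}(\psi\otimes\varphi)$. I expect this passage to the limit through the joint functional calculus to be the crux for $p\neq 2$.
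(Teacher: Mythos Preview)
Your argument for the second and third bullets via the master inequality $(\star)$ is exactly the paper's approach: the paper records the factorization $\|(Q_1\otimes I)e^{-itH}(\psi\otimes\varphi)\|=\|Q_1e^{-itH_1}\psi\|\,\|\varphi\|$ and the pointwise bound $\sqrt{a^2+b^2}\geq\max(|a|,|b|)$, then says ``combining both gives the estimate for all the notions of transport.'' Your write-up simply makes this explicit.

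For the first bullet, your treatment is in fact \emph{more} careful than the paper's. The paper's sketch reads the strong case through the componentwise factorization: it argues that ``any estimate in the direction of one Hamiltonian holds for $H$ along the same direction,'' i.e.\ it is implicitly handling the vector operator $Q(t)/t=(Q_1(t)/t,\,Q_2(t)/t)$, for which each component decouples cleanly on simple tensors, and then notes that convergence in the $q_2$ direction requires the extra hypothesis on $\varphi$. Your $p=2$ argument via the additive splitting $|Q|^2=|Q_1|^2\otimes I+I\otimes|Q_2|^2$ is the honest version of this, and your nonvanishing check via pairing with $\psi\otimes\varphi$ is a nice touch the paper omits.

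Your identification of the obstacle for general $p$ is correct and is not resolved by the paper either: the paper's two observations give only the lower bound $(\star)$ and the componentwise convergence of $Q_i(t)/t$, neither of which yields convergence of $|Q|^p(t)/t^p(\psi\otimes\varphi)$ when $p\neq 2$, since $(|Q_1|^2+|Q_2|^2)^{p/2}$ does not split. So your skepticism here is well placed; the paper's ``combining both gives the estimate'' is a genuine gap at this level of generality, and your proposed remedy (upgrading to strong resolvent convergence of the $B_i(t)$ and passing through the joint functional calculus) is a reasonable way to close it, though it strengthens the hypothesis beyond what is stated.
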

\begin{remark}
    We note that in many applications one has upper bounds, for example for $H$ with bounded potential, such upper bounds hold trivially in the discrete setting, and by Radin-Simon \cite{RadinSimon} in the continuum. Therefore, a lower bound suffices to show ballistic transport in the appropriate sense. 
\end{remark}
\begin{remark}
    We note that since the spectral measure of a separable potential is the convolution of the spectral measures of the constituent operators, a naive application of the Guarneri-Combes-Last theorem gives a lower bound by $T^{\alpha}$ for $\alpha$ the maximum of the $\alpha$ continuity exponents of the measures, so the above direct bound improves this to $T^{2\alpha}$. 
\end{remark}
\begin{proof}
    This result follows immediately from two observations. First, we note that 
    \begin{align*}
        \|(Q_1\otimes \id)e^{-itH}(\psi \otimes \varphi)\|=\|Q_1e^{-itH_1}\psi\|\|e^{-itH_2}\varphi\|=\|Q_1e^{-itH_1}\psi\|\|\varphi\|
    \end{align*}
    and similarly to $Q_2$ and $\varphi$. This implies that any estimate in the direction of one Hamiltonian holds for $H$ along the same direction. Thus, to get strong ballistic transport in the $q_2$ direction (which is implied by strong ballistic transport), one needs to assume strong ballistic transport or the absence of ballistic transport. \par
    The second is noting that 
    \begin{align*}
        |Q|\varphi=\sqrt{|Q_1|^2+|Q_2|^2}\varphi
    \end{align*}
    and we have that $\sqrt{a^2+b^2}\geq \max\{|a|,|b|\}$.  Combining both gives the estimate for all the notions of transport.
\end{proof}

\subsection{Other Types of Transport}\label{otherTransport}

As mentioned in the introduction, in addition to ballistic transport, there are two other commonly used notions of transport: diffusive and absence of transport. In fact, one finds states with transport exponent with very different $\beta_\psi^\pm (p)$ in any dimension. \par
The absence of transport is a consequence of strong dynamical localization \cite{del1995localization}. As such, it is known to exist in some regimes (for more information, cf.\ the surveys \cite{hundertmark2008short, stolz2011introduction} and the references therein). These types of results even extend beyond the Schr\"odinger operator to long-range operators (see, for example, \cite{jitomirskaya2021upper}). In some cases, one can prove the absence of transport without localization, but in similar regimes, see, for example, \cite{damanik2007upper,  han2018quantum,  jitomirskaya2016dynamical,shamis2023upper}. In the same vein, there are models with absolutely continuous states that exhibit absence of transport. These models are infinite-dimensional (as implied by the Guarneri, Combes, and Last bounds); for more details, see \cite{vidal1999spectrum}.\par

Quantum diffusion, on the other hand, is expected to be common in higher dimensional systems in the presence of disorder \cite{stolz2011introduction}, though it is still an open question in the study of disordered systems to show that this holds. There are a few systems that have been proven to have quantum diffusion. One of these is Wegner's $N$ orbital model in the limit $N\rightarrow\infty $ (see \cite[Theorem 16]{schulz1998anomalous}). \par
We have already mentioned the result in \cite{damanik2008quantum}, where by choosing, for example, $\lambda =12$, one gets that $\beta(p)<\frac{1}{2}$, thus excluding diffusion and ballistic transport. We note that using a separable potential construction, we can extend this construction to higher dimensions (by choosing a larger value of $\lambda$). \par
We note that it is known that in some cases, $\beta^\pm (p)$ does depend on $p$ - a phenomenon called quantum intermittency; see \cite{tcheremchantsev2005dynamical} for details of one such model. Another example in which this phenomenon is known is the random polymer model, cf. \cite{MR2318858, MR1957731} for more details.

\subsection{Periodic, Quasi-Periodic, and Limit-Periodic Operators. }\label{semiperiodic}

Given the relationship described in the previous subsection and the diverse spectral properties of almost periodic operators, cf. \cite{Simon}, it will come as no surprise that the transport properties for almost periodic operators are similarly rich and, as a result, have been well-studied over the last 30 years. We give a brief overview of the literature. 

In \cite{AschKnauf,damanik2015quantum, Fillman2021}, strong ballistic transport is shown for periodic continuum Schr\"odinger operators, (block) Jacobi matrices, and Jacobi operators on $\ell^2(\mathbb Z^d)$, respectively, each of which has a purely absolutely continuous spectral type. Specifically, Asch and Knauf \cite{AschKnauf} show that for a periodic potential $V$ that is form small relative to $-\Delta$ (for example, for bounded $V$), one has strong ballistic transport in the sense of Definition~\ref{BTdef} with $W=H^1(\mathbb R^d)\cap D(Q)$. Meanwhile, in \cite{damanik2015quantum}, the corresponding notion of strong ballistic transport in the discrete setting is shown to hold. Namely, \cite{damanik2015quantum} prove strong ballistic transport for $W=D(Q)$ and $\psi \to P \psi$ a self-adjoint operator with trivial kernel. Using the common core criterion, they conclude strong resolvent convergence. With this, and ballistic upper bounds proven in \cite{damanik2010general}, they concluded, as described in Proposition~\ref{Implications}, $\beta_\varphi^\pm(p)=1$ for all $p>0$ and exponentially decaying $\varphi$. This result was subsequently extended to higher dimensional Jacobi matrices in \cite{Fillman2021}. See also recent results that examine not only ballistic transport but bounding the minimal velocity of the spread of the wave packets \cite{abdul2024slow}.\par 
Moving past the periodic setting, certain quasi-periodic operators are shown to exhibit pure point spectrum and, in fact, Anderson localization; see, for example, the works of Avila, Jitomirskaya, Bourgain, Goldstein, Schlag, Fr\"{o}hlich, Spencer, Wittwer, and Sinai \cite{AJ, Bourgain, BG, BGS, FSW, Sinai}. By Simon's result \cite{simon1990absence}, all states of these models exhibit an absence of ballistic transport. Additionally, there is \cite{bourgain2000anderson}, proving dynamical localization for more general analytic quasi-periodic potentials. We also note the results \cite{ge2019exponential,germinet2001strong,jitomirskaya2020exact}, that imply Strong Dynamical Localization for the almost Mathieu operator in certain regimes. For some additional results on pure point spectrum in the quasi-periodic class, see \cite{eliasson1997discrete, forman2021localization, ge2019exponential, jitomirskaya2018universal}.

In the limit-periodic setting, a work of Damanik and Gorodetski \cite{DamanikGorodetski} finds a dense set of discrete limit-periodic operators with pure-point spectrum. Meanwhile, P\"oschel \cite{Poschel}, as well as Damanik and Gan \cite{DamanikGan1, DamanikGan2} give examples of limit-periodic operators which exhibit an extremely strong form of Anderson localization, which in particular implies strong dynamical localization. For a discussion of this difference, we refer the reader to \cite{del1995localization}. 

However, ballistic transport, or some notion of it, is often proved for one-dimensional, almost periodic models that have purely absolutely continuous spectrum. A work by Fillman \cite{Fillman} shows that for limit-periodic Jacobi matrices that are exponentially quickly approximated (with suitably large exponent, in particular for the discrete Pastur-Tkachenko class, see \cite{Egorova}) by periodic Jacobi matrices, there is strong ballistic transport. This result is, again, leveraged to prove strong resolvent convergence with $W=\ell^1(\mathbb Z)$  in the sense of Definition~\ref{resolvent}. This result was extended to the continuous setting in \cite{young2021ballistic}, where one-dimensional limit-periodic Schr\"odinger operators in the continuum Pastur-Tkachenko class, see \cite{PasturTkachenko1}, are shown to exhibit strong ballistic transport for $W$ a set of states with suitable decay and regularity. These results motivate the following open question:
\begin{question}
    Prove strong ballistic transport for the Pastur-Tkachenko class of limit-periodic operators and a suitable set of initial states in $d\geq 2$. 
\end{question}

There have also been results in the setting of quasi-periodic operators, particularly those with ``small'' analytic potentials, which are often known to have purely absolutely continuous spectrum. In \cite{Kachkovskiy}, Kachkovskiy shows that for a large class of discrete quasi-periodic operators, Ces\`aro and phase averaged strong ballistic transport holds. More precisely, for $\theta$ denoting the phase of the sampling function defining the quasi-periodic operator $H_\theta$, and $Q_\theta(t)=e^{itH_\theta}Qe^{-itH_\theta}$, his proof shows 
\begin{align*}
    \lim\limits_{T\to\infty }\int\limits_{\mathbb T^d}\braket{Q_\theta}(T)\psi d\theta\to \int\limits_{\mathbb T^d}P(\theta)\psi d\theta
\end{align*}
for all $\psi \in D(Q)$ and $P(\theta)$ a bounded, self-adjoint operator with trivial kernel for each $\theta$ and constant norm $\|P(\theta)\|$. In particular, this implies Ces\`aro averaged strong ballistic transport along a subsequence of time scales and for almost every phase. The works of Zhao \cite{Zhao, Zhaocont}, in both the discrete and continuous setting, show that there is norm-growth ballistic transport for analytic potentials with Diophantine frequencies and small enough potentials. 

More recently, the works \cite{Kachkovskiy, Zhao,Zhaocont}, were refined by Ge and Kachkovskiy \cite{KachkovskiyGe} who proved strong ballistic transport for $W=D(Q)$ after projection onto energies for which the Schr\"odinger cocycle is reducible. Their results hold for a broad class of discrete multifrequency quasi-periodic operators.
 
There is also a result of Zhang and Zhao \cite{ZhaoZhang}, who proved ballistic transport in the exponent sense in the setting of discrete one-frequency quasi-periodic operators, remarkably only using that the operators have purely absolutely continuous spectrum. Precisely, given a one-dimensional quasi-periodic operator with phase $\theta$ and analytic potential, they are able to conclude that whenever $H_\theta$ has purely absolutely continuous spectrum for a.e. $\theta\in \mathbb T$, $\beta_\varphi^\pm(p)=1$ for $p>0$, a.e. $\theta$ and all suitably localized states $\varphi\ne 0$. This gives further evidence for the existence of an affirmative answer to Open Question~\ref{ACBT}. 

We pose the following interesting open question for a particularly important 1-dimensional quasi-periodic operator in the critical regime. 

\begin{question}
    Determine the transport exponents for $\psi\in D(Q)$ for the critical almost Mathieu operator:
    \begin{align*}
        (H_{\alpha,\theta}u)(n)=u(n+1)+u(n-1)+2\cos(2\pi(\alpha n+\theta))u(n)
    \end{align*}
    acting on $\ell^2(\bbZ)$.
\end{question}

In higher dimensions, we note the previously mentioned \cite{Stolzetall}, which proved ballistic transport in norm-growth Abel average sense, as well as \cite{black2023directional}, where strong ballistic transport on a dense set $W$ is derived as a consequence of the results there. \par

In addition to the above, there are several results concerning ballistic transport of Schr\"{o}dinger on graphs. Though it falls outside the scope of this survey, we mention Klein's work on the Bethe lattice \cite{klein1996spreading}, Aizenman and Warzel's work on tree graphs, \cite{aizenman2012absolutely}, and Korotyaev-Saburova \cite{korotyaev2017schrodinger}, who studied analogs of strip periodic potentials on more general graphs.

\section{Other Notions of Wave-Packet Spreading}\label{DiffNotion}

As mentioned in Section~\ref{intro}, ballistic transport is one way to measure the spread of a wave packet, and there are other ways to measure spread. In this review we mention escape probability (or spatial localization), scattering techniques, and dispersive estimates.
\subsection{Escape Probability}\label{Rage}
One natural way to measure the spread of wave packets is to measure the survival probability of a state from a given set. Or, in other words, to look at the quantity
\begin{align*}
    P_A(t)=\|\chi_Ae^{-itH}\psi\|,
\end{align*}
where $\chi_A $  is the indicator function of the set $A$ - which could depend on $t$ in general (though perhaps then it might be less intuitive to interpret this quantity as an escape probability). \footnote{Note that $P_A(t)$ is the square-root of the probability of finding the state in $A$ at time $t$.} \par
The most prominent result using this notion is the famous RAGE theorem, which can be stated as follows:
\begin{theorem}[RAGE\cite{AmreinGeorgescu,Ruelle}]
   Let $H$ be self-adjoint and let $A_n$ be a sequence of compact sets such that 
   $\chi_{A_n}\rightarrow 1$ pointwise, then
   \begin{align*}
       &\calH_{c}=\{\psi \in \calH\mid \lim\limits_{n\rightarrow \infty }\lim\limits_{T\rightarrow \infty }\frac{1}{T}\int\limits_0^TP_{A_n}(t)\, dt=0\},\\
       &\calH_{pp}=\{\psi \in \calH\mid \lim\limits_{n\rightarrow \infty }\sup\limits_{t\geq 0}P_{A_n^c}(t)=0\},
   \end{align*}
   where $A_n^c=\calX\setminus A_n$.  
\end{theorem}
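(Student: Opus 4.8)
The plan is to reduce both displayed characterizations to a single analytic input, \emph{Wiener's theorem}, together with one operator-theoretic fact: the compactness of $\chi_{A_n}$ against a resolvent of $H$. Recall that the spectral measure $\mu_\psi$ satisfies $\langle\psi,e^{-itH}\psi\rangle=\int_{\mathbb R}e^{-itE}\,d\mu_\psi(E)$, and Wiener's theorem asserts
\begin{align*}
    \lim_{T\to\infty}\frac{1}{T}\int_0^T|\langle\psi,e^{-itH}\psi\rangle|^2\,dt=\sum_{E\in\mathbb R}|\mu_\psi(\{E\})|^2.
\end{align*}
The right-hand side vanishes precisely when $\mu_\psi$ is atomless, i.e.\ precisely when $\psi\in\calH_c$; this equivalence is the scalar engine behind both identities.

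First I would establish the \emph{RAGE lemma}: for every compact operator $K$ and every $\psi\in\calH_c$,
\begin{align*}
    \lim_{T\to\infty}\frac{1}{T}\int_0^T\|Ke^{-itH}\psi\|^2\,dt=0.
\end{align*}
For a rank-one $K=\langle\phi,\cdot\rangle\phi$ one has $\|Ke^{-itH}\psi\|^2=\|\phi\|^2\,|\langle\phi,e^{-itH}\psi\rangle|^2$, and the polarized form of Wiener's theorem applied to the cross measure $\mu_{\phi,\psi}$ (defined by $\langle\phi,e^{-itH}\psi\rangle=\int e^{-itE}\,d\mu_{\phi,\psi}(E)$) gives the limit $\sum_E|\mu_{\phi,\psi}(\{E\})|^2$; each atom equals $\langle\phi,P_{\{E\}}\psi\rangle$ and vanishes because $\psi\in\calH_c$. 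One then passes to finite-rank $K$ by the triangle inequality and to general compact $K$ by norm approximation, using the uniform bound $\|e^{-itH}\psi\|=\|\psi\|$. The reverse scalar implication --- a vanishing time average forcing atomlessness --- is again exactly Wiener's theorem, and supplies the ``only if'' inclusions below.

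To reach $\chi_{A_n}$, I would use that although $\chi_{A_n}$ is not compact, the operator $K_n:=\chi_{A_n}(H+i)^{-1}$ is (for Schr\"odinger-type $H$ this is the Rellich compactness of $\chi_{A_n}(-\Delta+1)^{-1}$ transported by a resolvent identity). For $\psi\in\calH_c\cap D(H)$ one writes $\chi_{A_n}e^{-itH}\psi=K_n\,e^{-itH}(H+i)\psi$ with $(H+i)\psi\in\calH_c$, so the RAGE lemma makes the inner limit $\lim_T\frac1T\int_0^T P_{A_n}(t)\,dt$ vanish for each fixed $n$ (Cauchy--Schwarz converting the $L^2$ average to the $L^1$ one), whence the double limit vanishes; density of $\calH_c\cap D(H)$ and $P_{A_n}(t)\le\|\psi\|$ extend this to all of $\calH_c$. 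For the pure-point identity I argue on eigenfunctions: if $H\varphi=E\varphi$ then $P_{A_n^c}(t)=\|\chi_{A_n^c}\varphi\|$ is independent of $t$ and tends to $0$ as $n\to\infty$ by dominated convergence, and an $\varepsilon/3$ approximation of $\psi\in\calH_{pp}$ by finite combinations of eigenfunctions yields the uniform-in-$t$ escape. The two reverse inclusions follow by splitting $\psi=\psi_c+\psi_{pp}$: the RAGE lemma kills any continuous part in the time average, so a surviving pure-point part, whose mass satisfies $\|\chi_{A_n}\varphi\|\to\|\varphi\|$, would violate $\lim_n\lim_T\frac1T\int_0^T P_{A_n}(t)\,dt=0$, while the same lemma shows a nonzero continuous part cannot remain uniformly trapped in $A_n$.

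The genuinely model-dependent step --- and the main obstacle --- is the compactness of $K_n=\chi_{A_n}(H+i)^{-1}$: for an abstract self-adjoint $H$ this can fail, so one either restricts to Schr\"odinger operators, where Rellich's embedding theorem applies, or restates the theorem with abstract compact operators $C_n\to I$ strongly in place of $\chi_{A_n}$. The remaining care is bookkeeping: the polarized Wiener theorem for the complex measure $\mu_{\phi,\psi}$, the order of limits (the statement fixes $\lim_n\lim_T$, sidestepping any uniformity in $n$), and the finite-rank approximation of compact operators.
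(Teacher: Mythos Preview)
The paper does not prove this theorem; it is stated in Section~\ref{Rage} with attribution to \cite{AmreinGeorgescu,Ruelle} and used as background, so there is no in-paper proof to compare against. That said, your proposal is the standard textbook argument (essentially the Reed--Simon/Teschl treatment) and is correct in outline: Wiener's theorem supplies the scalar input, the RAGE lemma for compact operators follows by finite-rank approximation, and the passage to $\chi_{A_n}$ via compactness of $\chi_{A_n}(H+i)^{-1}$ is the right move in the Schr\"odinger setting the paper works in.

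You are also right to flag the compactness of $K_n=\chi_{A_n}(H+i)^{-1}$ as the model-dependent step; the paper implicitly assumes $\calH=L^2(\calX)$ with $\calX=\bbR^d$ or $\bbZ^d$ and $H=-\Delta+V$ with bounded $V$, where Rellich (or its discrete analogue, finite rank of $\chi_{A_n}$) applies. One small point worth making explicit in your reverse inclusion for $\calH_{pp}$: to conclude $\psi_c=0$ you combine the uniform-in-$t$ trapping of $\psi_c$ inside $A_n$ (inherited from $\psi$ and the already-proven forward direction for $\psi_{pp}$) with the RAGE lemma's vanishing time average for $\psi_c$; the tension between ``uniformly close to $\|\psi_c\|$ for large $n$'' and ``time average zero for each $n$'' forces $\|\psi_c\|=0$. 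Your sketch gestures at this but the order of the two limits matters and deserves one more sentence.
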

This strong theorem says that states in the continuous spectrum are characterized by the fact their survival probability inside a compact set goes to $0$ on average, regardless of the size of the set. For states in the absolutely continuous subspace, there is the following refinement, following from the Riemann-Lebesgue lemma:
\begin{theorem}
   Let $H$ be self-adjoint and let $A$ be a compact set. Then, for $\psi\in \calH_{ac}$,
   \begin{align*}
       \lim_{t\to\infty }P_A(t)=0.
   \end{align*}
\end{theorem}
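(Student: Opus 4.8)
The plan is to reduce the claim to the Riemann--Lebesgue lemma via the spectral theorem, exploiting that $A$ is compact and $\psi \in \calH_{ac}$, so that $\chi_A$ is (essentially) a trace-class-times-nice object applied to a state whose spectral measure is absolutely continuous. The quantity to control is $P_A(t) = \|\chi_A e^{-itH}\psi\|$, and squaring gives $P_A(t)^2 = \langle e^{-itH}\psi, \chi_A e^{-itH}\psi\rangle = \langle \psi, e^{itH}\chi_A e^{-itH}\psi\rangle$. The natural route is to first establish the result for a dense and convenient class of states $\psi$ for which the spectral measure has a nice density, and then pass to the general $\psi \in \calH_{ac}$ by a density argument using that $\chi_A$ and $e^{-itH}$ are bounded (indeed $e^{-itH}$ is unitary), so that $P_A(t)$ is Lipschitz in $\psi$ uniformly in $t$.

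First I would invoke the spectral theorem to write $\psi$ via its spectral representation: since $\psi \in \calH_{ac}$, the spectral measure $\mu_\psi$ is absolutely continuous with respect to Lebesgue measure, so one may represent $e^{-itH}\psi$ through $\int e^{-it\lambda}\, dE_\lambda \psi$. The key object is then the vector-valued function whose Fourier-type transform in $t$ one wishes to show decays. Concretely, I would consider the scalar function
\begin{align*}
    f(t) = \langle \varphi, \chi_A e^{-itH}\psi\rangle
\end{align*}
for $\varphi$ in a dense set, and recognize that after inserting the spectral decomposition this becomes an integral of $e^{-it\lambda}$ against an $L^1$ density (this is where absolute continuity of $\mu_\psi$ enters decisively). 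The Riemann--Lebesgue lemma then forces $f(t) \to 0$. To upgrade from weak decay of matrix elements to norm decay $\|\chi_A e^{-itH}\psi\| \to 0$, I would use the compactness of $A$: the operator $\chi_A$ composed with a spectral localization is compact in the relevant sense, so weak convergence to zero of $e^{-itH}\psi$ (which follows from RAGE, as $\calH_{ac}\subseteq\calH_c$) combined with compactness of $\chi_A$ (restricted appropriately) yields norm convergence $\chi_A e^{-itH}\psi \to 0$.

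The main obstacle is making precise the compactness input that converts weak convergence into norm convergence. In the discrete setting $\calX = \bbZ^d$ this is transparent: $\chi_A$ for compact (finite) $A$ is literally a finite-rank projection, so $\|\chi_A e^{-itH}\psi\| \to 0$ follows immediately from componentwise (hence weak) decay, which is itself the Riemann--Lebesgue statement applied to each of the finitely many coordinates $\langle \delta_n, e^{-itH}\psi\rangle$. In the continuum this requires slightly more care, since $\chi_A$ is not compact on its own; the cleanest fix is to write $\chi_A = \chi_A(H-i)^{-1}(H-i)$ and use that $\chi_A(H-i)^{-1}$ is compact (a standard consequence of $A$ being compact and the resolvent mapping into a Sobolev space), together with the fact that $(H-i)e^{-itH}\psi = e^{-itH}(H-i)\psi$ converges weakly to zero for $\psi \in D(H)\cap\calH_{ac}$; a density argument in $D(H)\cap\calH_{ac}$ then closes the general case. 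I expect the bookkeeping of this continuum compactness argument, rather than any conceptual difficulty, to be the part demanding the most attention.
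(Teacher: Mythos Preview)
Your approach is correct and matches the paper's, which offers no detailed proof beyond the phrase ``following from the Riemann--Lebesgue lemma.'' Your outline is precisely the standard elaboration of that phrase: Riemann--Lebesgue gives $e^{-itH}\psi \rightharpoonup 0$ weakly for $\psi \in \calH_{ac}$, and a relative-compactness argument upgrades this to norm decay of $\chi_A e^{-itH}\psi$.

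Two small remarks. First, the parenthetical ``(which follows from RAGE, as $\calH_{ac}\subseteq\calH_c$)'' is a mislabeling: RAGE yields only Ces\`aro-averaged decay, not pointwise-in-$t$ weak convergence. The pointwise weak convergence is exactly the Riemann--Lebesgue statement you already invoked a few lines earlier, so just drop the RAGE attribution. Second, your continuum fix via compactness of $\chi_A(H-i)^{-1}$ is the right move, but note it relies on the Schr\"odinger structure (Rellich--Kondrachov), not on the bare hypothesis ``$H$ self-adjoint'' as literally stated; the paper's ambient assumption that $H=-\Delta+V$ with bounded $V$ supplies this, so the argument goes through in context.
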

\par
In the context of this review, it is interesting to know that a state could have ballistic transport in the strong sense and still not be a purely absolutely continuous state:
\begin{proposition}\label{ppPlusBT}
    For bounded $V$, let $\psi \in \calH_{ac},\varphi \in \calH_{pp}$, be such that $\psi $ exhibits ballistic transport in the strong sense, then $\psi +\varphi$ exhibits ballistic transport in the strong sense 
\end{proposition}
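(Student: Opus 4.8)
The plan is to show that adding a pure point component $\varphi$ to an absolutely continuous state $\psi$ that exhibits strong ballistic transport does not destroy strong ballistic transport, because the pure point part contributes nothing to the asymptotic velocity by Theorem~\ref{Absance}. Concretely, I will argue that $\frac{Q(t)}{t}(\psi+\varphi) \to P_\psi$, the same nonzero limit that $\psi$ alone produces.

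The key steps proceed as follows. First, since $\psi$ exhibits strong ballistic transport, we have by definition that $\frac{Q(t)}{t}\psi \to P_\psi$ in $\calH^d$, where $P_\psi \neq 0$. Second, I would invoke Theorem~\ref{Absance}: since $V$ is bounded (hence relatively bounded) and $\varphi \in \calH_{\mathrm{pp}}$, provided $\varphi \in D(Q)\cap D(H)$, we have
\begin{align*}
    \lim_{t\to\infty}\left\|\frac{Q(t)}{t}\varphi\right\| = 0,
\end{align*}
so that $\frac{Q(t)}{t}\varphi \to 0$. Third, by linearity of the Heisenberg-evolved operator $\frac{Q(t)}{t}$ on the appropriate domain, I would combine these to conclude
\begin{align*}
    \frac{Q(t)}{t}(\psi+\varphi) = \frac{Q(t)}{t}\psi + \frac{Q(t)}{t}\varphi \longrightarrow P_\psi + 0 = P_\psi,
\end{align*}
which is nonzero. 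Hence $\psi+\varphi$ exhibits strong ballistic transport with $P_{\psi+\varphi}=P_\psi$.

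The main obstacle, and the point requiring care, is the domain bookkeeping. Theorem~\ref{Absance} requires $\varphi \in \calH_{\mathrm{pp}}\cap D(Q)\cap D(H)$, whereas the statement only presupposes $\psi \in \calH_{ac}$ with strong ballistic transport and $\varphi\in\calH_{pp}$; one must ensure that $\psi+\varphi$ lies in $\bigcap_{t\geq 0}D(Q(t))$ so that $\frac{Q(t)}{t}(\psi+\varphi)$ is even well-defined, and that the additivity $\frac{Q(t)}{t}(\psi+\varphi)=\frac{Q(t)}{t}\psi+\frac{Q(t)}{t}\varphi$ holds as an identity in $\calH^d$ for each finite $t$. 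Since $V$ is bounded, the Radin--Simon Theorem (Theorem~\ref{Radin-Simon}) guarantees $e^{-itH}$ preserves $D(Q)$, so $Q(t)$ is defined on $D(Q)\cap D(\Xi)$ for all $t$, and on this common domain linearity is immediate; thus the implicit standing assumption is that $\psi,\varphi$ (and hence their sum) lie in a domain where both the strong ballistic transport of $\psi$ and the hypotheses of Theorem~\ref{Absance} for $\varphi$ apply. Once this domain issue is resolved, the argument is a one-line application of linearity together with the absence-of-transport result for pure point states.
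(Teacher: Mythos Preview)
Your proposal is correct and follows essentially the same approach as the paper: both invoke the absence-of-ballistic-transport result for pure point states (Theorem~\ref{Absance}, equivalently the appendix of \cite{black2023directional}) to conclude $\frac{Q(t)}{t}\varphi\to 0$, then combine this with the assumed convergence $\frac{Q(t)}{t}\psi\to P_\psi$ by linearity. Your added discussion of domain issues is more careful than the paper's one-line proof, but the underlying argument is identical.
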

\begin{proof}
    This result is immediate when considering that $\varphi \in \calH_{pp}$  has an absence of ballistic transport, as per the appendix of \cite{black2023directional}, so we get that 
    \begin{align*}
        \left\| \frac{Q(t)}{t}(\psi+\varphi)-P\psi \right\|\rightarrow 0
    \end{align*}
    as $t\rightarrow\infty $.
\end{proof}
As discussed above, the connection between the absence of ballistic transport and belonging to the pure point subspace is not that clear. 
\subsection{Scattering}\label{scattering}
Another way to measure the spread of a wave packet is by comparing it to a state with known evolution, usually by comparing it to the free evolution (perhaps up to a phase). For more detail, see \cite{RSVol3}. This is usually done by introducing the wave operators
\begin{align*}
    \Omega=\slim\limits_{t\rightarrow \infty}e^{itH}e^{-itH_0}.
\end{align*}
Then, states $\psi$ in the range of $\Omega$ are called asymptotically free, as they evolve asymptotically freely in the sense that there is a state $\phi\in \mathcal H$ whose free evolution approaches $\psi$'s under the perturbed operator: 
\begin{align*}
    \lim_{t\to\infty}\|e^{-itH_0}\phi-e^{-itH}\psi\|=0.
\end{align*}

As was noted in \cite{black2023directional}, without some assumptions on the potential, it is not clear that a state in the range of $\Omega$ exhibits ballistic transport in the strong sense- the proof there requires more refined control. In that paper, the authors found a sufficient criterion in terms similar to those of Cook's method (see, for example, \cite{RSVol3} for more details) for a state to be in the range of $\Omega$ and exhibit ballistic transport.\par
One of the main tools in scattering theory is showing the existence of certain asymptotic observables. For more details about their uses, see \cite{derezinski1997scattering}.  In this context, asymptotic velocity is proven many times using commutator techniques that require mollifying the operator. As a result, under some conditions, one gets that for any real $f\in C_c^\infty$,
\begin{align*}
    \slim\limits_{t \rightarrow\infty} f\left(\frac{Q(t)}{t}\right)=f(\Xi).
\end{align*}
This is closely related to strong resolvent convergence but is usually proven using different methods. \par
The Mourre estimate is another basic tool in scattering theory that played a major role in the proof of asymptotic completeness for $N$-body systems \cite{hunziker1999minimal}. As mentioned above, such an estimate also implies ballistic transport as well as absolutely continuous spectrum, and it plays an important part in proving asymptotic velocity estimates of the type mentioned above. 
\subsection{Dispersion}\label{disp}
Finally, another way to measure the spread of a wave packet is by using dispersive estimates.  A dispersive estimate is usually meant to be an estimate of the form
\begin{align*}
    \|e^{-itH}\psi\|_{p}\leq t^{-\alpha} \|\psi\|_{q}
\end{align*}
for some $p,q>0$, and $\alpha>0$. Most commonly, it is shown for $p=\infty$, $q=1$, since this allows interpolation with the trivial $L^2\rightarrow L^2$ bound (which is just the unitarity of the propagator). These estimates are very useful in the context of many nonlinear PDEs and their analysis (see, for example, \cite{schlag2005dispersive}). \par
Though this idea seems to be highly related to the notion of ballistic transport, given the wave packet has a constant $L^2$ norm, it doesn't imply ballistic transport. This is due to Proposition~\ref{ppPlusBT}, and noting that eigenstates don't disperse. So if $\psi$ satisfies a dispersive estimate, and $\varphi$ is an eigenstate, then we have that 
\begin{align*}
    \|e^{-itH}(\psi+\varphi)\|_p\geq \|e^{itE}\varphi\|_p-\|e^{-itH}\psi\|_p>\|\varphi\|_p-t^{-\alpha }\|\psi\|_q>\|\varphi\|_p.
\end{align*}
In particular, this norm does not decay. So, we get that these states exhibit ballistic transport but don't satisfy a dispersive estimate. 
\subsection{Discussion}
Many of the counter-examples described above are obtained by adding a pure point state to a state that transports ballistically. This leads to the following question of what can be said when there is added information about the spectral type.
\begin{question}
    Let $H$ be an operator with only continuous spectrum. Do states with ballistic transport satisfy a dispersive estimate? Do these states have to be in the absolutely continuous space?
\end{question}
Given \cite{bellissard2000subdiffusive}, one could answer this question by seeing if the subdiffusive state they generate obeys a dispersive estimate.\par
Naturally, one can also ask the following counterpart:
\begin{question}
    Let $H$ be an operator with only continuous (or even absolutely continuous) spectrum, such that all states satisfy some dispersive estimate. Do these states exhibit ballistic transport in any of the senses above?   
\end{question}

\section{Ballistic Transport in the Physics Literature}\label{physics}

\subsection{The Term ``Ballistic Transport" in the Physics Literature} \label{BTphysics}

It is interesting to note that although the term ``ballistic transport" is also used in the physics literature, it is used in a slightly different context. Ballistic transport is usually used to distinguish between two different regimes: one is dominated by collisions, and the other is dominated by free evolution. That is, the perspective of physicists is typically mesoscopic. \par
In this section, we will try to explain the source of the difference as well as explain the connection between the notions. We refer the reader to \cite{bellissard2002coherent,bellissard1995anomalous,schulz1998anomalous} for more information. In \cite{bertini2021finite}, the author gives a relatively comprehensive survey of the theory from the physics point of view for the one-dimensional system with finite temperatures. \par 
The distinction between diffusive and ballistic regimes is connected to the conductivity of a given system. The conductivity of a system is a macroscopic quantity (the inverse of the resistance), which is only well-defined if the motion of electrons is diffusive. This is an underlying assumption and can be inferred from the Einstein relation or from the assumptions used when constructing the transport theory for electrons in conductors (see \cite{ashcroft2022solid,girvin2019modern} for more details).\par 
The immediate question that arises is how to discuss conductance for a system with non-diffusive behavior (for example, with ballistic transport). This is done by adding additional mechanisms to the single-particle theory;  usually through adding thermodynamical mechanisms. For example, for a motion in a periodic system (which is ballistic), one could add a dissipation by interacting with a "bath" of free bosons that slows down the motion of electrons, and thus, on average, the motion will become diffusive. In other words, the spectral theory of a single particle is insufficient on its own to describe it.\par
Having said that, the underlying transport properties of the one-particle theory will determine some characteristics of the conductivity, especially at the limit where the temperature, $T$, goes to $0$.\par 
The common derivation of such a relationship is through the Drude formula for electric conductance, which can be written as
\begin{align}\label{Drude}
    \sigma\propto \tau,   
\end{align}
where $\sigma $ is the conductance and $\tau$ is a phenomenological parameter called ``collision time", which can be thought of as an average time between the collisions of the electron and the environment (which has a defined temperature) \cite{bellissard1995anomalous}. The precise definition is more delicate and can depend on the type of dissipation mechanism introduced. \par
Still, the underlying single particle dynamics will dictate how the conductance will depend on the collision time. For example, for ballistic transport, each collision slows down the electrons, and thus, one expects the conductance to increase with the collision time (less collision implies faster traveling). And so, the ballistic regime is described by $\sigma \propto \tau$.\par
For underlying diffusive behavior, one does not need to introduce these types of mechanisms, so the conductance is expected to be constant with the collision time. So,  the diffusive regime is described by $\sigma \propto 1$ .\par
Finally, for localized states, one needs these mechanisms to accelerate the electrons so that less collision will decrease the conductance. So the localized regime will be described as $\sigma  \propto \tau^{-1}$.  \par
Using tools from thermodynamics and many-body theory, one can find the connection between $\tau$ and the temperature, which allows one to know what type of interaction is dominant in the system (electron-electron or electron-phonon, for example). Many measurements distinguish between the regimes by measuring the temperature dependence (see, for example, \cite[Section 11.9]{girvin2019modern}). Specifically, the dependence in the limit $T\rightarrow0$ is of interest as it will imply transport expected in the idealized one-particle system. \par
Even at this point, without a clear definition, the crucial difference between the mathematical notions of ballistic transport described in Section \ref{def} and this one becomes clearer. First and foremost, the notion of ballistic transport described in thie survey doesn't take into account the thermalization processes that allow for conductance to be well-defined. 
Moreover, most of the notions we defined above (except for strong resolvent ballistic transport) are local in the sense that they measure something about a given state or a set of states; the physics definition aims to measure something about the system as a whole (or about a ``typical" wave packet in some sense). \par
It falls outside the scope of this survey to detail all the thermalization or the introduction of other mechanisms to allow for well-defined conductance. We refer the reader to \cite{bellissard2002coherent, bellissard1995anomalous, schulz1998anomalous} for more details about these derivations. In these papers, Bellissard and Schulz-Baldes use a slightly modified definition of transport exponent, which they denote by $\sigma_{\textrm{diff}}$, and which is defined by 
\begin{align*}
    2\sigma_{\textrm{diff}}=\inf \Big\{ \gamma\in \bbR : \int\limits_1^\infty \frac{\|(\braket{Q}(t)-Q)\delta_0\|^2}{t^\gamma}\frac{\,dt}{t}<\infty \Big\}.
\end{align*}
\begin{remark}
\begin{enumerate}
    \item We note that since this is done in the discrete setting, $\braket{Q}(t)-Q$ is a bounded operator for each $t$, so $\sigma_{\textrm{diff}}$ is well defined. 
    \item  This notion should be compared to the averaged ballistic transport exponents $\braket{\beta}$.
    \item When comparing those two, we note that this notion corresponds more specifically to $\braket{\beta}^+$, but it is weaker than it. This is because while $\beta^+$ gives an upper bound pointwise in time (in the form of a $\limsup$), $\sigma_{\textrm{diff}}$ requires an integral upper bound, which is of course weaker.  
    \item In their definition of the transport exponent, there is also an average over a family of Hamiltonians parametrized by $\omega$ (as one has to take the average over the hull of the Hamiltonian- generated by translations). This is common when dealing with quasicrystals; we omit this in the notation above for simplicity. 
\end{enumerate}
\end{remark}
Using this definition, they are able to show that under certain approximations, for a quasi-periodic system in the discrete setting, one can derive an ``anomalous Drude formula'',
\begin{align*}
    2\sigma_{\textrm{diff}}-1 = \inf \Big\{ \gamma\in \bbR : \int\limits_1^\infty \frac{\sigma(\tau)}{\tau^\gamma}\frac{\,d\tau}{\tau}<\infty \Big\},
\end{align*}
where $\sigma(\tau)$ is the conductance and $\tau$ has a slightly modified definition than the "collision time" defined before (this $\tau$ is the ``relaxation time'', which comes from the details of the mechanism considered in these papers).\par
This result highlights a few important differences between the mathematical perspective, which emphasizes the stronger notions of ballistic transport, and the physical perspective, which ``sees" only a relatively weak notion of ballistic transport. \par
Naturally, as we can't measure a system at $0$ temperature, a model of the measurable system should include these dissipating mechanisms. In this case, the importance of the transport exponent is only with respect to the strength of these mechanisms, so one would expect that the important quantity will be averaged across time. \par
Similar approaches to conductivity are still used in various contexts, for example, \cite{sinner2022diffusive}, where they use this asymptotic to determine the conductivity.

\subsection{Mathematical Ballistic Transport in the Physics Literature}\label{MathBTPhysics}

Though the phrase ``ballistic transport" is used in a different meaning in the physics literature, as discussed above, in some cases, notions close to the definitions given in this survey are used. We collect some of these results in this section. \par
There are several results in the context of fractal Hamiltonians, specifically in the context of one-dimensional dynamics. In \cite{abe1987fractal,wilkinson1994spectral, zhong1995quantum}, the authors considered the Fibonacci and Harper Hamiltonian, and higher dimensional Fibonacci quasilattices, and plotted the diffusion exponent, $\beta^\pm (2)$, for different ratios of frequencies. In \cite{thiem2009wave}, the authors consider a class of quasi-periodic tight-binding Hamiltonians and show numerically that it has shifted between subballistic transport (exponent $\sim 0.8$)  and subdiffusive (exponent $\sim 0.4$) behavior.  \par
It seems that there is a resurgence of interest in estimating and measuring the mean-square displacement (which is $\bar{Q}^2=(Q(t)-Q(0))^2$ ) in the context of molecular dynamics mostly by the work of Marquardt. Marquardt and Bindech, in \cite{bindech2023mean, marquardt2021mean}, considered a free particle ($V=0$), or under periodic potential, in thermal equilibrium. By thermal equilibrium, we mean that the states are the states given by weighing all possible states with a weight function of $e^{-\beta H}$, where $\beta$ is the inverse of the temperature parameter. For a system with a pure point spectrum only, this will be defined as 
\begin{align*}
    \psi =\sum_{n}e^{-\beta E_n}\varphi_n.
\end{align*}
In this setting, under some approximations, they showed that  $\bar{Q} \sim \braket{t}$, for $\braket{t}=\sqrt{t^2+1}$. In another work, Marquardt \cite{marquardt2022quantum} numerically examines this quantity for CO molecule on CU(100) substrate for short times. \par
This leads to a natural open question:
\begin{question}
    How does thermal averaging affect the known results of ballistic transport?
\end{question}
\begin{remark}
    We note that this is not trivial, as the thermal averaging ``prefers'' lower energies, where the bounds, for example, in Asch-Knauf \cite{AschKnauf}, are not uniform. So, proving that the results hold with this weight is nontrivial. In addition, one needs to make sure that such states are in the domain $D(Q)$ in order to make sense of this question.
\end{remark}

\bibliographystyle{amsplain}
\bibliography{bib}
\end{document}